\let\oldnl\nl
\newcommand{\nonl}{\renewcommand{\nl}{\let\nl\oldnl}}
\renewcommand{\algocf@makecaption}[2]{%
	\addtolength{\hsize}{1.5\algomargin}%
	\parbox[t]{\hsize}{\algocf@captiontext{#1:}{#2}}%
	\addtolength{\hsize}{-1.5\algomargin}%
}
\newcommand{\Attr}{\mathit{Attr}}
\newcommand{\TAttr}{\mathit{TAttr}}
\newcommand{\pr}{\textsf{pr}}
\renewcommand{\r}{\textsf{r}}
\newcommand{\dom}{\textsf{dom}}
\newcommand{\invpr}{\pr^{-1}}
\newcommand{\invalpha}{{\overline{\alpha}}}
\newcommand{\sqdiamond}{\tikz [x=1.2ex,y=1.2ex,line width=.08ex] \draw (0,.5) -- (.5,1) -- (1,.5) -- (.5,0) -- (0,.5) -- cycle;}
\newcommand{\sqsq}{\tikz [x=0.95ex,y=1ex,line width=.1ex] \draw (0,0) -- (1,0) -- (1,1) -- (0,1) -- (0,0) -- cycle;}
\newcommand{\Even}{{\scalebox{0.95}{\sqdiamond}}}
\newcommand{\Odd}{{\scalebox{0.9}{$\sqsq$}}}
\newcommand{\Veven}{V_{\Even}}
\newcommand{\Vodd}{V_{\Odd}}
\title{Attracting Tangles to Solve Parity Games}
\author{
  Tom van Dijk\thanks{The author is supported by the FWF, NFN Grant S11408-N23 (RiSE)}
}
\institute{
  Formal Models and Verification \\ Johannes Kepler University, Linz \\
  \email{tom.vandijk@jku.at}
}
\begin{document}
\maketitle
\begin{abstract}
Parity games have important practical applications in formal verification and synthesis,
especially to solve the model-checking problem of the modal mu-calculus.
They are also interesting from the theory perspective,
because they are widely believed to admit a polynomial solution,
but so far no such algorithm is known.

We propose a new algorithm to solve parity games based on learning tangles, which
are strongly connected subgraphs for which one player has a strategy to win all cycles in the subgraph.
We argue that tangles play a fundamental role in the prominent parity game solving algorithms.
We show that tangle learning is competitive in practice and the fastest solver for large random games.

\end{abstract}

\section{Introduction}
\label{sec:introduction}

Parity games are turn-based games played on a finite graph.
Two players \emph{Odd} and \emph{Even} play an infinite game by moving a token along the edges of the graph.
Each vertex is labeled with a natural number \emph{priority} and the winner of the game is determined by the parity of the highest priority that is encountered infinitely often.
Player Odd wins if this parity is odd; otherwise, player Even wins.

Parity games are interesting both for their practical applications and for complexity theoretic reasons.
Their study has been motivated by their relation to many problems in formal verification and synthesis that can be reduced to the problem of solving parity games, as parity games capture the expressive power of nested least and greatest fixpoint operators~\cite{DBLP:conf/cav/Fearnley17}.
In particular, deciding the winner of a parity game is polynomial-time equivalent to checking non-emptiness of non-deterministic parity tree automata~\cite{DBLP:conf/stoc/KupfermanV98}, and to the explicit model-checking problem of the modal $\mu$-calculus~\cite{DBLP:journals/tcs/EmersonJS01,DBLP:conf/dagstuhl/2001automata,DBLP:journals/tcs/Kozen83}.

Parity games are interesting in complexity theory, 
as the problem of determining the winner of a parity game is known to lie in $\text{UP}\cap\text{co-UP}$~\cite{DBLP:journals/ipl/Jurdzinski98},
%
which is contained in $\text{NP}\cap\text{co-NP}$~\cite{DBLP:journals/tcs/EmersonJS01}. 
This problem is therefore unlikely to be NP-complete and it is widely believed that a polynomial solution exists.
Despite much effort, such an algorithm has not been found yet.

The main contribution of this paper is based on the notion of a \emph{tangle}.
A tangle is a strongly connected subgraph of a parity game for which one of the players has a strategy to win all cycles in the subgraph.
We propose this notion and its relation to dominions and cycles in a parity game.
Tangles are related to snares~\cite{DBLP:conf/lpar/Fearnley10} and quasi-dominions~\cite{DBLP:conf/cav/BenerecettiDM16}, with the critical difference that tangles are strongly connected, whereas snares and quasi-dominions may be unconnected as well as contain vertices that are not in any cycles.
We argue that tangles play a fundamental role in various parity game algorithms, in particular in priority promotion~\cite{DBLP:conf/cav/BenerecettiDM16,Benerecetti2018}, Zielonka's recursive algorithm~\cite{DBLP:journals/tcs/Zielonka98}, strategy improvement~\cite{DBLP:conf/lpar/Fearnley10,DBLP:conf/cav/Fearnley17,DBLP:conf/cav/VogeJ00}, small progress measures~\cite{DBLP:conf/stacs/Jurdzinski00}, and in the recently proposed quasi-polynomial time progress measures~\cite{DBLP:conf/stoc/CaludeJKL017,DBLP:conf/spin/FearnleyJS0W17}.

The core insight of this paper is that tangles can be used to attract sets of vertices at once, since the losing player is forced to escape a tangle.
This leads to a novel algorithm to solve parity games called \emph{tangle learning}, which is based on searching for tangles along a top-down $\alpha$-maximal decomposition of the parity game.
New tangles are then attracted in the next decomposition.
This naturally leads to learning nested tangles and, eventually, finding dominions.
We prove that tangle learning solves parity games and present several extensions to the core algorithm,
including \emph{alternating} tangle learning, where the two players take turns maximally searching for tangles in their regions, and \emph{on-the-fly} tangle learning, where newly learned tangles immediately refine the decomposition.


We relate the complexity of tangle learning to the number of learned tangles before finding a dominion, which is related to how often the solver is distracted by paths to higher winning priorities that are not suitable strategies.

We evaluate tangle learning in a comparison based on the parity game solver Oink~\cite{Oink2018},
using the benchmarks of Keiren~\cite{DBLP:conf/fsen/Keiren15} as well as random parity games of various sizes.
We compare tangle learning to priority promotion~\cite{DBLP:conf/cav/BenerecettiDM16,Benerecetti2018} and to Zielonka's recursive algorithm~\cite{DBLP:journals/tcs/Zielonka98} as implemented in Oink.

\section{Preliminaries}
\label{sec:preliminaries}

Parity games are two-player turn-based infinite-duration games over a finite directed graph $G=(V,E)$, where
every vertex belongs to exactly one of two players called player \emph{Even} and player \emph{Odd}, 
and where every vertex is assigned a natural number called the \emph{priority}.
Starting from some initial vertex, a play of both players is an infinite path in $G$ where the owner of each vertex determines the next move. The winner of such an infinite play is determined by the parity of the highest priority that occurs infinitely often along the play.

More formally, a parity game $\Game$ is a tuple $(\Veven, \Vodd, E, \pr)$ where $V=\Veven\cup \Vodd$ is a set of vertices partitioned into the sets $\Veven$ controlled by player \emph{Even} and $\Vodd$ controlled by player \emph{Odd}, and $E\subseteq V\times V$ is a left-total binary relation describing all moves, i.e., every vertex has at least one successor.
We also write $E(u)$ for all successors of $u$ and $u\rightarrow v$ for $v\in E(u)$.
The function $\pr\colon V\rightarrow \{0,1,\dotsc,d\}$
assigns to each vertex a \emph{priority}, where $d$ is the highest priority in the game.



We write $\pr(v)$ for the priority of a vertex $v$
and $\pr(V)$ for the highest priority of vertices $V$ and
$\pr(\Game)$ for the highest priority in the game $\Game$.
Furthermore, we write $\pr^{-1}(i)$ for all vertices with the priority $i$.
A \emph{path} $\pi=v_0 v_1 \dots$ is a sequence of vertices consistent with $E$, i.e.,
$v_i \rightarrow v_{i+1}$ for all successive vertices.
A \emph{play} is an infinite path.
We denote with $\inf(\pi)$ the vertices in $\pi$ that occur infinitely many times in $\pi$.
Player Even wins a play $\pi$ if $\pr(\inf(\pi))$ is even; player Odd wins if $\pr(\inf(\pi))$ is odd.
We write $\text{Plays}(v)$ to denote all plays starting at vertex $v$.
%

A \emph{strategy} $\sigma\colon V\to V$ is a partial function that assigns to each vertex in its domain a single successor in $E$, i.e., $\sigma\subseteq E$.
We refer to a strategy of player $\alpha$ to restrict the domain of $\sigma$ to $V_\alpha$. 
In the remainder, all strategies $\sigma$ are of a player $\alpha$.
We write $\text{Plays}(v, \sigma)$ for the set of plays from $v$ consistent with $\sigma$, and $\text{Plays}(V,\sigma)$ for $\{\,\pi\in\text{Plays}(v,\sigma)\mid v\in V\,\}$.

A fundamental result for parity games is that they are memoryless determined~\cite{DBLP:conf/focs/EmersonJ91}, i.e., each vertex is either winning for player Even or for player Odd, and both players have a strategy for their winning vertices.
Player $\alpha$ wins vertex $v$ if they have a strategy $\sigma$ such that all plays in $\text{Plays}(v,\sigma)$ are winning for player $\alpha$.

%




Several algorithms for solving parity games employ \emph{attractor computation}.
Given a set of vertices $A$,
the attractor of $A$ for a player $\alpha$ represents those vertices from which player $\alpha$ can force a play to visit $A$.
%
We write $\Attr^\Game_\alpha(A)$ to attract vertices in $\Game$ to $A$ as player $\alpha$,
i.e.,
\[\mu Z \,.\, A \cup \{\;v\in V_\alpha \mid E(v)\cap Z \neq \emptyset\;\} \cup \{\;v\in V_{\invalpha} \mid E(v)\subseteq Z\;\}\]
%
Informally, we compute the $\alpha$-attractor of $A$ with a backward search from $A$, initially setting $Z:=A$ and iteratively adding $\alpha$-vertices with a successor in $Z$ and $\invalpha$-vertices with no successors outside $Z$.
We also obtain a strategy $\sigma$ for player $\alpha$, starting with an empty strategy, by selecting a successor in $Z$ when we attract vertices of player $\alpha$ and when the backward search finds a successor in $Z$ for the $\alpha$-vertices in $A$.
We call a set of vertices $A$ $\alpha$-maximal if $A=\Attr^\Game_\alpha(A)$. 

A \emph{dominion} $D$ is a set of vertices for which player $\alpha$ has a strategy $\sigma$ such that all plays consistent with $\sigma$ stay in $D$ and are winning for player $\alpha$.
We also write a \emph{$p$-dominion} for a dominion where $p$ is the highest priority encountered infinitely often in plays consistent with $\sigma$, i.e., $p:=\max\{\,\pr(\inf(\pi))\mid \pi\in\textrm{Plays}(D,\sigma)\,\}$.

\section{Tangles}
\label{sec:tangles}




\begin{definition}

A \emph{$p$-tangle} is a nonempty set of vertices $U\subseteq V$ with $p=\pr(U)$, for which player $\alpha \equiv_2 p$ has a strategy $\sigma\colon U_\alpha\rightarrow U$, such that the graph $(U,E')$, with 
$E':= E\cap\big(\sigma\cup(U_\invalpha\times U)\big)$,
is strongly connected
and player $\alpha$ wins all cycles in $(U,E')$.
\end{definition}


Informally, a tangle is a set of vertices for which player $\alpha$ has a strategy to win all cycles inside the tangle.
Thus, player $\invalpha$ loses all plays that stay in $U$ and is therefore forced to escape the tangle.
The highest priority by which player $\alpha$ wins a play in $(U,E')$ is $p$.
We make several basic observations related to tangles.
\begin{enumerate}
	\item {A $p$-tangle from which player $\invalpha$ cannot leave is a $p$-dominion.}
	\item {Every $p$-dominion contains one or more $p$-tangles.} 
	\item {Tangles may contain tangles of a lower priority.}
\end{enumerate}


Observation~1 follows by definition.
Observation~2 follows from the fact that dominions won by player $\alpha$ with some strategy $\sigma$ must contain strongly connected subgraphs where all cycles are won by player $\alpha$ and the highest winning priority is $p$.
For observation~3, consider a $p$-tangle for which player $\invalpha$ has a strategy that avoids priority $p$ while staying in the tangle.
Then there is a $p'$-tangle with $p'<p$ in which player $\invalpha$ also loses.

\begin{wrapfigure}{r}{0.4\textwidth}
	\vspace{-2.21em} 
	\centering
	\scalebox{0.9}{
		\begin{tikzpicture}		
		\tikzset{every edge/.append style={>=stealth,->,solid,thick,draw,text height=0.5ex,text depth=0.2ex}}
		\tikzset{every node/.append style={minimum size=6mm,draw,fill=black!10}}
		\tikzset{my label/.style args={#1:#2}{
				append after command={
					($(\tikzlastnode.center)$) coordinate [label={[label distance=5mm,black]#1:\textbf{\strut #2}}]
		}}}
		\tikzstyle{even}=[diamond,minimum size=6mm,draw]
		\tikzstyle{odd}=[regular polygon,regular polygon sides=4,minimum size=6mm,draw]
		\tikzstyle{seven}=[even,fill=white]
		\tikzstyle{sodd}=[odd,fill=white]
		
		\draw[pattern=crosshatch,pattern color=blue!15] plot [smooth cycle, tension=0.5]
		coordinates {(-0.5,1.85) (-0.5,-0.4) (2,-0.45) (2,1.9)};
		\draw[fill,white] plot [smooth cycle, tension=0.5]
		coordinates {(0.85,1.8) (0.8,-0.25) (1.95,-0.35) (2,1.8)};		
		\draw[pattern=crosshatch dots,pattern color=blue!35] plot [smooth cycle, tension=0.5]
		coordinates {(0.85,1.8) (0.8,-0.25) (1.95,-0.35) (2,1.8)};		
		\draw (0,1.5)     node[seven, my label={above:b}] (a) {5};
		\draw (-1.5,0.75) node[even, my label={above:a}] (b) {6};
		\draw (0,0)       node[odd, my label={below:d}]  (c) {1};
		\draw (1.5,0)     node[seven, my label={below:e}] (d) {3};
		\draw (1.5,1.5)   node[even, my label={above:c}] (e) {2};
		\draw (a) edge (c);
		\draw (b) edge (a);
		\draw (c) edge (d);
		\draw (c) edge (b);
		\draw (d) edge [bend left=20] (e);
		\draw (e) edge [bend left=20] (d);
		\draw (e) edge (a);
		\end{tikzpicture}
	}
	\caption{A $5$-dominion with a $5$-tangle and a $3$-tangle}
	\label{fig:tangled}
	\vspace{-2em} 
\end{wrapfigure}

We can in fact find a hierarchy of tangles in any dominion $D$ with winning strategy $\sigma$
by computing the set of winning priorities
$\{\,\pr(\inf(\pi))\mid \pi\in\textrm{Plays}(D,\sigma)\,\}$.
There is a $p$-tangle in $D$ for every $p$ in this set.
Tangles are thus a natural substructure of dominions.

See for example Figure~\ref{fig:tangled}. Player Odd wins this dominion with highest priority $5$ and strategy $\{\,\textbf{d}\rightarrow\textbf{e}\,\}$.
Player Even can also avoid priority $5$ and then loses with priority $3$.
The 5-dominion $\{\, \textbf{a}, \textbf{b}, \textbf{c}, \textbf{d}, \textbf{e} \,\}$ contains the $5$-tangle $\{\, \textbf{b}, \textbf{c}, \textbf{d}, \textbf{e} \,\}$ and the $3$-tangle $\{\, \textbf{c}, \textbf{e} \,\}$.
%
%


\section{Solving by Learning Tangles}
\label{sec:tanglelearning}

Since player $\invalpha$ must escape tangles won by player $\alpha$, we can treat a tangle as an abstract vertex controlled by player $\invalpha$ that can be attracted by player $\alpha$, thus attracting all vertices of the tangle.
This section proposes the \emph{tangle learning} algorithm, which searches for tangles along a top-down $\alpha$-maximal decomposition of the game.
We extend the attractor to attract all vertices in a tangle when player $\invalpha$ is forced to play from the tangle to the attracting set.
After extracting new tangles from regions in the decomposition, we iteratively repeat the procedure until a dominion is found.
We show that tangle learning solves parity games. 

%

\subsection{Attracting tangles}

Given a tangle $t$, we denote its vertices simply by $t$ and its witness strategy by $\sigma_T(t)$.
We write $E_T(t)$ for the edges from $\invalpha$-vertices in the tangle to the rest of the game: $E_T(t) := \{\, v \mid u \to v \land u\in t\cap V_\invalpha \land v \in V\setminus t \,\}$.
We write $T_\Odd$ for all tangles where $\pr(t)$ is odd (won by player Odd) and $T_\Even$ for all tangles where $\pr(t)$ is even.
%
%
We write $\TAttr^{\Game, T}_\alpha(A)$ to attract vertices in $\Game$ and vertices of tangles in $T$ to $A$ as player $\alpha$,
i.e.,
\begin{align*}
\mu Z \,.\, A & \cup \{\;v\in V_\alpha \mid E(v)\cap Z \neq \emptyset\;\} \cup \{\;v\in V_{\invalpha} \mid E(v)\subseteq Z\;\} \\
& \cup \{\; v\in t\mid t\in T_\alpha \land 
E_T(t)\neq\emptyset \land 
E_T(t)\subseteq Z \;\}
\end{align*}

This approach is not the same as the subset construction.
Indeed, we do not add the tangle itself but rather add all its vertices together.
Notice that this attractor does not guarantee arrival in $A$, as player $\invalpha$ can stay in the added tangle, but then player $\invalpha$ loses.

To compute a witness strategy $\sigma$ for player $\alpha$, as with $\Attr^\Game_\alpha$, we select a successor in $Z$ when attracting single vertices of player $\alpha$ and when we find a successor in $Z$ for the $\alpha$-vertices in $A$.
When we attract vertices of a tangle, we update $\sigma$ for each tangle $t$ sequentially, by updating $\sigma$ with the strategy in $\sigma_T(t)$ of those $\alpha$-vertices in the tangle for which we do not yet have a strategy in $\sigma$, i.e., $\{\, (u,v)\in\sigma_T(t)\mid u\notin \textsf{dom}(\sigma) \,\}$.
This is important since tangles can overlap.

In the following, we call a set of vertices $A$ $\alpha$-maximal 
if $A = \TAttr^{\Game, T}_\alpha(A)$.
Given a game $\Game$ and a set of vertices $U$, we write $\Game\cap U$ for the subgame $\Game'$ where $V':=V\cap U$ and $E':=E\cap(V'\times V')$.
Given a set of tangles $T$ and a set of vertices $U$, we write $T\cap U$ for all tangles with all vertices in $U$, i.e., $\{\,t\in T\mid t\subseteq U\,\}$,
and we extend this notation to $T\cap\Game'$ for the tangles in the game $\Game'$, i.e., $T\cap V'$.



\subsection{The \texttt{solve} algorithm}

\begin{algorithm}[t]
	\Def{\Solve{$\Game$}}{
	$W_\Even$ $\leftarrow$ $\emptyset$, $W_\Odd$ $\leftarrow$ $\emptyset$, $\sigma_\Even$ $\leftarrow$ $\emptyset$, $\sigma_\Odd$ $\leftarrow$ $\emptyset$, $T$ $\leftarrow$ $\emptyset$ \;
	\While{$\Game\neq\emptyset$}{
		$T, d$ $\leftarrow$ \Search{$\Game$, $T$} \;
		$\alpha$ $\leftarrow$ $\pr(d) \bmod 2$ \;
		$D,\sigma$ $\leftarrow$ $\Attr_\alpha^\Game(d)$ \;
		$W_\alpha$ $\leftarrow$ $W_\alpha\cup D$, $\sigma_\alpha$ $\leftarrow$ $\sigma_\alpha\cup\sigma_T(d)\cup\sigma$ \;
		$\Game$ $\leftarrow$ $\Game\setminus D$ ,
		$T \leftarrow T \cap (\Game\setminus D)$ \;
	}
	\Return $W_\Even, W_\Odd, \sigma_\Even, \sigma_\Odd$
}
\caption{The \texttt{solve} algorithm which computes the winning regions and winning strategies for both players of a given parity game.}
\label{alg:solve}
\end{algorithm}

We solve parity games by iteratively searching and removing a dominion of the game, as in~\cite{DBLP:conf/cav/BenerecettiDM16,DBLP:journals/siamcomp/JurdzinskiPZ08,DBLP:journals/jcss/Schewe17}.
See Algorithm~\ref{alg:solve}.
The \texttt{search} algorithm (described below) is given a game and a set of tangles and returns an updated set of tangles and a tangle $d$ that is a dominion.
Since the dominion $d$ is a tangle, we derive the winner $\alpha$ from the highest priority (line~5) and use standard attractor computation to compute a dominion $D$ (line~6).
We add the dominion to the winning region of player $\alpha$ (line~7).
We also update the winning strategy of player $\alpha$ using the witness strategy of the tangle $d$ plus the strategy $\sigma$ obtained during attractor computation.
To solve the remainder, we remove all solved vertices from the game and we remove all tangles that contain solved vertices (line~8).
When the entire game is solved, we return the winning regions and winning strategies of both players (lines~9).
Reusing the (pruned) set of tangles for the next \texttt{search} call is optional; if \texttt{search} is always called with an empty set of tangles, the ``forgotten'' tangles would be found again.

\subsection{The \texttt{search} algorithm}

\begin{algorithm}[t]
	\Def{\Search{$\Game$, $T$}}{
		\While{$\True$}{
		$\r$ $\leftarrow$ $\emptyset$, $Y$ $\leftarrow$ $\emptyset$ \;
		\While{$\Game\setminus\r \neq \emptyset$}{
			$\Game'$ $\leftarrow$ $\Game \setminus \r$, $T'\leftarrow T\cap(\Game\setminus\r)$ \;
			$p$ $\leftarrow$ $\pr(\Game')$, $\alpha$ $\leftarrow$ $\pr(\Game') \bmod 2$ \;
			$Z,\sigma$ $\leftarrow$ $\TAttr^{\Game',T'}_\alpha\big(\{\, v \in \Game' \mid \pr(v)=p \,\}\big)$ \; 
			$A$ $\leftarrow$ \texttt{extract-tangles}($Z$, $\sigma$) \;
			\lIf{$\exists\, t\in A \colon E_T(t) = \emptyset$}{
				\Return $T\cup Y$, $t$
			}
			$\textsf{r}$ $\leftarrow$ $\textsf{r} \cup \big(Z\mapsto p \big)$, $Y$ $\leftarrow$ $Y\cup A$ \;
		}
		$T$ $\leftarrow$ $T\cup Y$ \;			
		}
	} 
	\caption{The \texttt{search} algorithm which, given a game and a set of tangles, returns the updated set of tangles and a tangle that is a dominion.}
	\label{alg:search}
\end{algorithm}


The \texttt{search} algorithm is given in Algorithm~\ref{alg:search}. 
The algorithm iteratively computes a top-down decomposition of $\Game$ into sets of vertices called \emph{regions} such that each region is $\alpha$-maximal for the player $\alpha$ who wins the highest priority in the region.
Each next region in the remaining subgame $\Game'$ is obtained by taking all vertices with the highest priority $p$ in $\Game'$ and computing the tangle attractor set of these vertices for the player that wins that priority, i.e., player $\alpha\equiv_2 p$.
%
%
%
%
%
As every next region has a lower priority,
each region is associated with a unique priority $p$.
We record the current region of each vertex in an auxiliary partial function $\textsf{r}\colon V\to \{0,1,\dotsc,d\}$ called the region function.
We record the new tangles found during each decomposition in the set $Y$.


In each iteration of the decomposition, we first obtain the current subgame $\Game'$ (line~5) and the top priority $p$ in $\Game'$ (line~6).
%
We compute the next region by attracting (with tangles) to the vertices of priority $p$ in $\Game'$ (line~7).
We use the procedure \texttt{extract-tangles} (described below) to obtain new tangles from the computed region (line~8).
For each new tangle, we check if the set of outgoing edges to the full game $E_T(t)$ is empty.
If $E_T(t)$ is empty, then we have a dominion and we terminate the procedure (line~9).
If no dominions are found, then we add the new tangles to $Y$ and update $\r$ (line~10).
After fully decomposing the game into regions, we add all new tangles to $T$ (line~11) and restart the procedure.

\subsection{Extracting tangles from a region}
\label{sec:extract_tangles}

To search for tangles in a given region $A$ of player $\alpha$ with strategy $\sigma$, we first remove all vertices where player $\invalpha$ can play to lower regions (in $\Game'$) while player $\alpha$ is constrained to $\sigma$, i.e.,
%
\[\nu Z \,.\, A \cap \big(\{\;v\in V_\invalpha \mid E'(v)\subseteq Z\;\} \cup \{\;v\in V_{\alpha} \mid \sigma(v)\in Z\;\}\big)\]

This procedure can be implemented efficiently with a backward search, starting from all vertices of priority $p$ that escape to lower regions.
Since there can be multiple vertices of priority $p$, the reduced region may consist of multiple unconnected tangles.
We compute all nontrivial bottom SCCs of the reduced region, restricted by the strategy $\sigma$.
Every such SCC is a unique $p$-tangle.

%



\subsection{Tangle learning solves parity games}

We now prove properties of the proposed algorithm.

\begin{lemma}
	\label{lemma:alpha_maximal}
	All regions recorded in $\r$ in Algorithm~\ref{alg:search} are $\alpha$-maximal in their subgame.
\end{lemma}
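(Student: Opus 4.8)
The plan is to reduce the statement to the \emph{idempotency} of the tangle attractor, i.e.\ to the fact that a tangle attractor is always its own attractor. Concretely, each region recorded in $\r$ with priority $p$ is, by line~7 of Algorithm~\ref{alg:search}, exactly the set $Z := \TAttr^{\Game',T'}_\alpha(A_p)$, where $A_p := \{\, v\in\Game' \mid \pr(v)=p \,\}$ and $\Game',T'$ are the subgame and tangle set present when the region is computed. The appropriate reading of ``$\alpha$-maximal in their subgame'' is $Z = \TAttr^{\Game',T'}_\alpha(Z)$, obtained from the definition of $\alpha$-maximality by replacing $(\Game,T)$ with $(\Game',T')$. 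So it suffices to show that applying the attractor to the already-attracted set $Z$ produces nothing new, which I would then combine with a trivial induction over the decomposition loop (line~4--10) to conclude that every set added to $\r$ via the update on line~10 has this form.

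First I would fix the underlying operator. Write $\Phi_A(Z) := A \cup \{\,v\in V_\alpha \mid E(v)\cap Z\neq\emptyset\,\} \cup \{\,v\in V_\invalpha \mid E(v)\subseteq Z\,\} \cup \{\,v\in t \mid t\in T'_\alpha \land E_T(t)\neq\emptyset \land E_T(t)\subseteq Z\,\}$, so that $\TAttr^{\Game',T'}_\alpha(A)=\mu Z.\,\Phi_A(Z)$. The key observation is that each clause is monotone in $Z$: enlarging $Z$ can only help the conditions $E(v)\cap Z\neq\emptyset$, $E(v)\subseteq Z$ and $E_T(t)\subseteq Z$. Hence $\Phi_A$ is monotone and the least fixpoint exists, and the computed region satisfies the fixpoint equation $\Phi_{A_p}(Z)=Z$. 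The core step is then the idempotency argument: since $A_p\subseteq Z$, the base set of $\Phi_Z$ (the operator whose least fixpoint is $\TAttr^{\Game',T'}_\alpha(Z)$) contains that of $\Phi_{A_p}$ while the monotone clauses coincide, so $\Phi_Z(Z) = Z \cup \Phi_{A_p}(Z) = Z \cup Z = Z$. Thus $Z$ is a fixpoint of $\Phi_Z$; as $\TAttr^{\Game',T'}_\alpha(Z)$ is the \emph{least} fixpoint we get $\TAttr^{\Game',T'}_\alpha(Z)\subseteq Z$, while the reverse inclusion $Z\subseteq\TAttr^{\Game',T'}_\alpha(Z)$ is immediate because $Z$ is the base set. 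Equality is exactly $\alpha$-maximality in $\Game'$.

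I expect the only delicate point to be the tangle clause rather than the lattice-theoretic skeleton, which is the standard fact that an attractor is a fixpoint and therefore its own attractor. Specifically, I would check that $\{\,v\in t \mid t\in T'_\alpha \land E_T(t)\neq\emptyset \land E_T(t)\subseteq Z\,\}$ is genuinely monotone and that $E_T(t)$ is interpreted consistently inside the subgame $\Game'$, so that a tangle all of whose escapes land in $Z$ really contributes all of $t$ (and so that such a tangle, once its escapes are in $Z$, keeps them in $Z$ under the inclusion $A_p \subseteq Z$). It is also worth remarking, to forestall confusion, that the lemma asserts maximality only within the subgame $\Game'$ and not within the full game $\Game$: higher-priority regions removed earlier may well attract vertices of $Z$, which is precisely why the claim is phrased relative to ``their subgame''.
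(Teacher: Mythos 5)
Your proposal is correct and takes essentially the same route as the paper: the paper's proof simply asserts that each region $Z$ is $\alpha$-maximal because it is computed with $\TAttr$ at line~7, which is exactly the idempotency fact you spell out via the monotone least-fixpoint argument. The only detail the paper makes explicit that your ``trivial induction'' leaves implicit is that new tangles are added to $T$ only at line~11, after which $\r$ is reset to $\emptyset$, so the tangle set against which maximality is measured never changes while regions remain recorded in $\r$.
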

\begin{proof}
	This is vacuously true at the beginning of the search.
	Every region $Z$ is $\alpha$-maximal as $Z$ is computed with $\TAttr$ (line~7).
	Therefore the lemma remains true when $\r$ is updated at line~10.
	New tangles are only added to $T$ at line~11, after which $\r$ is reset to $\emptyset$.
	\qed
\end{proof}

\begin{lemma}
	\label{lemma:alpha_wins}
	All plays consistent with $\sigma$ that stay in a region  
	are won by player $\alpha$.
\end{lemma}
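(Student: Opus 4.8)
The plan is to prove the slightly stronger statement that every cycle in the region $Z$ that is consistent with $\sigma$ on the vertices of player $\alpha$ has a highest priority $\equiv_2\alpha$. This suffices, because for any play consistent with $\sigma$ that stays in $Z$, the set of vertices visited infinitely often is strongly connected through the edges used infinitely often, so the vertex attaining the highest priority seen infinitely often lies on such a $\sigma$-cycle; if all these cycles have an $\alpha$-parity maximum, so does the play and player $\alpha$ wins. Let $p=\pr(Z)$, so that $\alpha\equiv_2 p$, and recall that every vertex of $Z$ has priority at most $p$, since $Z$ is attracted inside the subgame $\Game'$ whose top priority is $p$.

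First I would equip $Z$ with the rank function $\rho$ induced by the fixpoint computing $\TAttr$, assigning to each vertex the iteration at which it enters $Z$, with the priority-$p$ base set $A$ receiving rank $0$. The key monotonicity facts, read directly off the three clauses of $\TAttr$, are: an $\alpha$-vertex attracted as a single vertex has $\sigma$ pointing to strictly smaller rank; an $\invalpha$-vertex attracted as a single vertex has all its successors at strictly smaller rank; and for an attracted tangle $t\in T_\alpha$ the escape edges $E_T(t)$ all land at strictly smaller rank, whereas $\sigma_T(t)$ and the internal $\invalpha$-moves stay inside $t$ at the same rank. Hence $\rho$ is non-increasing along every edge taken by $\invalpha$ and along every $\sigma$-edge of $\alpha$, and it decreases strictly unless the move is tangle-internal.

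Consequently, around any $\sigma$-cycle $C$ in $Z$ the rank is constant, say equal to $k$. If $k=0$ then $C\subseteq A$ and $\pr(C)=p\equiv_2\alpha$. If $k>0$, then no vertex of $C$ can have been attracted as a single vertex, as that would force a strict decrease, so every vertex of $C$ is a tangle vertex and every edge of $C$ is tangle-internal. I would then argue that $C$ lies inside a single tangle $t\in T_\alpha$ and uses exactly its witness strategy $\sigma_T(t)$ on $C\cap V_\alpha$: the enabling observation is that an $\invalpha$-vertex of $t$ can only leave $t$ through an edge of $E_T(t)$, which drops the rank, so at the constant rank $k$ player $\invalpha$ is confined to its responsible tangle; combined with the sequential, first-wins assignment of $\sigma$ from tangle strategies this pins the edges of $C$ into a single graph $(t,E')$. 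Then $C$ is a cycle of $(t,E')$, which by the definition of a tangle is won by player $\alpha$, so $\pr(C)\equiv_2\alpha$.

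The hard part will be precisely this last confinement step in the presence of overlapping tangles: since a vertex may belong to several tangles and $\sigma$ records only the first strategy assigned to it, I must rule out that a $\sigma$-cycle threads through two distinct same-rank tangles via shared vertices. The rank argument already settles this for $\invalpha$, as every genuine escape strictly lowers the rank, so the remaining care is to check that a shared $\alpha$-vertex, whose fixed $\sigma$-move points into one particular tangle, cannot be used to switch tangles without also lowering the rank; carrying out this bookkeeping cleanly is where I expect the proof to need the most attention.
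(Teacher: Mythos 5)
Your proof is correct and follows essentially the same route as the paper's: the paper argues by induction over the attractor computation that any new $\sigma$-cycle either passes through the initial priority-$p$ set $A$ (hence is won by $\alpha$ since $p$ is the maximal priority in the region) or is a cycle inside an attracted tangle, and your rank function $\rho$ is just the explicit form of that induction, with your confinement analysis for overlapping tangles being if anything more careful than the paper's own two-line case~(b). One small repair: monotonicity of $\rho$ fails on edges leaving the base set $A$ itself (an $\invalpha$-vertex of priority $p$ sits in $Z$ unconditionally and may have successors of arbitrary rank in $Z$, and the $\sigma$-successor chosen for an $\alpha$-vertex of $A$ need not have smaller rank), so the correct dichotomy is ``the cycle meets $A$'', in which case it contains a priority-$p$ vertex and is won outright, versus ``the cycle avoids $A$ and therefore has constant positive rank'', after which your tangle-confinement argument applies unchanged.
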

\begin{proof}
	Based on how the attractor computes the region, we show that all cycles (consistent with $\sigma$) in the region are won by player $\alpha$.
	Initially, $Z$ only contains vertices with priority $p$; therefore, any cycles in $Z$ are won by player $\alpha$.
	We consider two cases:
	\begin{enumerate*}[a)]
	\item When attracting a single vertex $v$, any new cycles must involve vertices with priority $p$ from the initial set $A$, since all other $\alpha$-vertices in $Z$ already have a strategy in $Z$ and all other $\invalpha$-vertices in $Z$ have only successors in $Z$, otherwise they would not be attracted to $Z$.
	Since $p$ is the highest priority in the region, every new cycle is won by player $\alpha$.
	\item When attracting vertices of a tangle, we set the strategy for all attracted vertices of player $\alpha$ to the witness strategy of the tangle.
	Any new cycles either involve vertices with priority $p$ (as above) or are cycles inside the tangle that are won by player $\alpha$.
	\end{enumerate*}
	\qed
\end{proof}

\begin{lemma}
	\label{lemma:play_to_p}
	Player $\invalpha$ can reach a vertex with the highest priority $p$ from every vertex in the region, via a path in the region that is consistent with strategy $\sigma$.
\end{lemma}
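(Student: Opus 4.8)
The plan is to prove the statement by strong induction on the stage at which a vertex enters the region. Write the region as the least fixed point $Z_0\subseteq Z_1\subseteq\cdots$ of the operator defining $\TAttr^{\Game',T'}_\alpha(A)$, where $Z_0=A=\{v\in\Game'\mid\pr(v)=p\}$ and $Z_{i+1}$ extends $Z_i$ by one round of the three attractor rules (single $\alpha$-vertices, single $\invalpha$-vertices, and whole tangles). For $v$ in the region let its \emph{rank} be the least $i$ with $v\in Z_i$. Since $p$ is the highest priority of $\Game'$, the rank-$0$ vertices are exactly the priority-$p$ vertices, so it suffices to exhibit, from every region vertex $v$ of rank $>0$, a $\sigma$-consistent path inside the region to a vertex of strictly smaller rank; chaining these via the induction hypothesis reaches rank $0$, i.e.\ a vertex of priority $p$ (the base case rank $0$ being trivial).

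First I would dispose of the two single-vertex cases. If $v$ is an $\alpha$-vertex attracted on its own, then $\sigma(v)$ was chosen in the previous approximant, so the single $\sigma$-edge $v\to\sigma(v)$ stays in the region and lowers the rank. If $v$ is an $\invalpha$-vertex attracted on its own, then all of its successors already lie in the previous approximant, so player $\invalpha$ may move to any one of them, again staying in the region and lowering the rank. In both cases the induction hypothesis completes the path to a priority-$p$ vertex.

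The main work, and the main obstacle, is the tangle case: $v$ enters the region at some stage $r$ as part of a tangle $t\in T_\alpha$ with $E_T(t)\neq\emptyset$ and $E_T(t)\subseteq Z_{r-1}$. By the definition of a tangle the graph $(t,E')$ is strongly connected and contains an $\invalpha$-vertex $u$ carrying an (original-game) escape edge to some $w\in E_T(t)\subseteq Z_{r-1}$, which therefore has rank $<r$. Strong connectivity yields a path from $v$ to $u$ inside $(t,E')$; player $\invalpha$ follows it—moving freely at its own vertices and relying on $\alpha$ to follow the tangle's witness strategy at $\alpha$-vertices—and finally takes the escape edge $u\to w$, reaching rank $<r$. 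Note that the escape edge is a genuine move of $\invalpha$ in $E$, not an edge of $E'$; the within-tangle detour alone lives in $(t,E')$.

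The delicate point is that on the freshly attracted part $t\setminus Z_{r-1}$ the region strategy $\sigma$ coincides with $\sigma_T(t)$ (these $\alpha$-vertices receive their strategy precisely from this tangle), so the portion of the path lying there is genuinely $\sigma$-consistent, whereas on an overlapping earlier tangle $\sigma$ may deviate from $\sigma_T(t)$. I would resolve this by truncation: walk the $(t,E')$-path from $v$ only until it first meets a vertex of $Z_{r-1}$ (if ever). Up to that point every vertex lies in $t\setminus Z_{r-1}$, where $\sigma=\sigma_T(t)$, so the prefix is $\sigma$-consistent and stays inside the region; and its endpoint—either the first encountered vertex of $Z_{r-1}$, or else $w$ after the escape edge—has rank $<r$, where the induction hypothesis applies. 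Since ranks are finite and strictly decrease across successive applications of the hypothesis, while each tangle detour is itself a finite path, the resulting $\sigma$-consistent path reaches priority $p$ in finitely many steps.
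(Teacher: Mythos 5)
Your proof is correct and follows essentially the same route as the paper: induction over the attractor computation, with the two single-vertex cases immediate and the tangle case resolved by strong connectivity of $(t,E')$ together with an escape edge of some $\invalpha$-vertex into the earlier approximant. The only difference is that you make explicit (via the rank ordering and the truncation at $Z_{r-1}$) the overlapping-tangle subtlety about where $\sigma$ agrees with $\sigma_T(t)$, which the paper's two-line argument leaves implicit; this is a tightening of the same argument rather than a different approach.
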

\begin{proof}
	The proof is based on how the attractor computes the region.
	This property is trivially true for the initial set of vertices with priority $p$.
	We consider again two cases:
	\begin{enumerate*}[a)]
	\item When attracting a single vertex $v$, $v$ is either an $\alpha$-vertex with a strategy to play to $Z$, or an $\invalpha$-vertex whose successors are all in $Z$. Therefore, the property holds for $v$.
	\item Tangles are strongly connected w.r.t. their witness strategy.
	Therefore player $\invalpha$ can reach every vertex of the tangle and since the tangle is attracted to $Z$, at least one $\invalpha$-vertex can play to $Z$.
	Therefore, the property holds for all attracted vertices of a tangle.
	\end{enumerate*}
	\qed
\end{proof}

\begin{lemma}
	\label{lemma:E_t_higher}
	For each new tangle $t$, all successors of $t$ are in higher $\alpha$-regions.
\end{lemma}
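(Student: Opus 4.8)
The plan is to combine the structural constraints that \texttt{extract-tangles} imposes on the $\invalpha$-vertices of $t$ with the $\invalpha$-maximality of the higher regions; here the successors of $t$ are understood as the escape targets $E_T(t)$, consistent with treating $t$ as an abstract $\invalpha$-vertex. Let $A$ be the $p$-region of player $\alpha\equiv_2 p$ from which $t$ is extracted inside the subgame $\Game'$, and recall that the \emph{higher regions} are exactly those already recorded in $\r$ when $A$ is computed, each carrying a strictly higher priority than $p$.

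First, I would unwind how $t$ arises. The region $A$ is reduced by the greatest fixpoint $\nu Z\,.\,A\cap(\cdots)$, which keeps an $\invalpha$-vertex only when all of its $\Game'$-successors already lie in $Z$; hence in the reduced set $Z$ every $\invalpha$-vertex $v$ satisfies $E'(v)\subseteq Z$, where $E'$ is the edge relation of $\Game'$. The tangle $t$ is then a nontrivial bottom SCC of $Z$ restricted by $\sigma$. For an $\invalpha$-vertex all its $Z$-edges survive this restriction, and a bottom SCC has no edge leaving it inside $Z$; together with $E'(v)\subseteq Z$ this forces every $\Game'$-successor of an $\invalpha$-vertex of $t$ to lie in $t$ itself. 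Consequently the only successors of $\invalpha$-vertices of $t$ that leave $t$ are those lying outside $\Game'$, i.e.\ vertices already in $\dom(\r)$, so every target in $E_T(t)$ is a vertex of some higher region.

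The key step is then to rule out edges into higher $\invalpha$-regions. Suppose for contradiction that some escape $v\to w$ with $v\in t\cap V_\invalpha$ lands in a higher region $R$ whose priority has the parity of $\invalpha$. By Lemma~\ref{lemma:alpha_maximal} applied to player $\invalpha$, $R$ is $\invalpha$-maximal in the subgame in which it was computed, and $v$ is still present there, since $v$ is only assigned to the lower-priority region $A$ afterwards. But $v$ is an $\invalpha$-vertex with an edge into $R$, so it meets the attraction condition of $\TAttr_\invalpha$ and would already have been pulled into $R$, contradicting $v\in A$. Hence every target in $E_T(t)$ lies in a higher $\alpha$-region, as claimed. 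I expect the main obstacle to be the bookkeeping across the two games: the reduction condition $E'(v)\subseteq Z$ lives in $\Game'$, whereas $E_T(t)$ and the regions are phrased in the full game $\Game$, and the argument turns on recognising that an escaping $\invalpha$-edge is precisely one that leaves $\Game'$ into $\r$, which is exactly what lets the $\invalpha$-maximality argument apply.
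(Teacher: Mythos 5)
Your proof is correct and follows essentially the same route as the paper's: first show that every escaping $\invalpha$-successor of $t$ must lie in $\dom(\r)$ because the reduction and the bottom-SCC property confine all $\Game'$-edges of $\invalpha$-vertices to $t$, and then invoke Lemma~\ref{lemma:alpha_maximal} to exclude higher $\invalpha$-regions. If anything, your version is slightly more careful in separating what the greatest-fixpoint reduction guarantees ($E'(v)\subseteq Z$) from what the bottom-SCC property guarantees (no edges from $B$ into $Z\setminus B$), which the paper compresses into a single sentence.
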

\begin{proof}
	For every bottom SCC $B$ (computed in \texttt{extract-tangles}), $E'(v)\subseteq B$ for all $\invalpha$-vertices $v\in B$, otherwise player $\invalpha$ could leave $B$ and $B$ would not be a bottom SCC.
	Recall that $E'(v)$ is restricted to edges in the subgame $\Game'=\Game\setminus\r$.
	Therefore $E(v)\subseteq\dom(\r)\cup B$ in the full game for all $\invalpha$-vertices $v\in B$.
	Recall that $E_T(t)$ for a tangle $t$ refers to all successors for player $\invalpha$ that leave the tangle.
	Hence, 
	$E_T(t)\subseteq\dom(\r)$ for every tangle $t:=B$.
	Due to Lemma~\ref{lemma:alpha_maximal}, no $\invalpha$-vertex in $B$ can escape to a higher $\invalpha$-region.
	Thus $E_T(t)$ only contains vertices from higher $\alpha$-regions when the new tangle is found by \texttt{extract-tangles}.
	\qed
\end{proof}

\begin{lemma}
	\label{lemma:unique_tangle}
	Every nontrivial bottom SCC $B$ of the reduced region restricted by witness strategy $\sigma$ is a unique $p$-tangle.
\end{lemma}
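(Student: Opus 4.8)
The plan is to verify the definition of a $p$-tangle point by point for $B$, using the three preceding lemmas, and then to settle the uniqueness claim by disjointness. Fix $\alpha \equiv_2 p$ as the owner of the region and let $\sigma$ be the witness strategy returned together with the region. The first thing I would do is identify the relevant edge relation: the SCCs are computed on the reduced region with the $\alpha$-vertices restricted to their single $\sigma$-edge and the $\invalpha$-vertices keeping their $\Game'$-edges. Restricted to $B$, this is exactly the relation $E' = E\cap\big(\sigma\cup(B_\invalpha\times B)\big)$ from the tangle definition, so the graph $(B,E')$ coincides with the $\sigma$-restricted subgraph induced on $B$; being an SCC, it is strongly connected, and nontriviality gives nonemptiness.

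Next I would check the two structural conditions on the vertices of $B$. For an $\alpha$-vertex $v\in B$, the $\nu$-fixpoint defining the reduced region only retains $v$ when $\sigma(v)$ is defined and lands inside the reduced region; since $B$ is a \emph{bottom} SCC under $\sigma$ and $v$ has its single out-edge $\sigma(v)$, that successor must lie in $B$, so $\sigma$ restricts to a strategy $\sigma\colon B_\alpha\to B$ as required. For an $\invalpha$-vertex $v\in B$, the bottom-SCC property (as already recorded in Lemma~\ref{lemma:E_t_higher}) gives $E'(v)\subseteq B$ within $\Game'$, while any edges escaping to $\dom(\r)$ lead only to higher $\alpha$-regions and are correctly filtered out by the $B_\invalpha\times B$ restriction in $E'$.

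I would then pin down the priority and the winning condition. Because $B$ lies inside a region whose associated (hence highest) priority is $p$, every vertex of $B$ has priority at most $p$, giving $\pr(B)\le p$. For the reverse, I invoke Lemma~\ref{lemma:play_to_p}: from any vertex of $B$ player $\invalpha$ can reach a priority-$p$ vertex along a $\sigma$-consistent path in the region; by the two facts above such a path cannot leave the bottom SCC, so $B$ contains a priority-$p$ vertex and $\pr(B)=p$. Finally, any cycle in $(B,E')$ is a $\sigma$-consistent play staying in the region, hence won by player $\alpha$ by Lemma~\ref{lemma:alpha_wins}; this is precisely the tangle's winning requirement, so $B$ is a $p$-tangle. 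For uniqueness I would note that bottom SCCs of a graph are pairwise vertex-disjoint, so distinct SCCs yield distinct tangles and each $B$ determines a single, well-defined $p$-tangle.

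The main obstacle I expect is the bookkeeping around the two edge relations. The SCCs are computed inside $\Game'=\Game\setminus\r$, whereas the tangle definition refers to the full relation $E$, and an $\invalpha$-vertex of $B$ may genuinely have escaping edges in $E$ that are absent from $\Game'$. The delicate point is to argue simultaneously that these escaping edges do not belong to $(B,E')$ (they are removed by the $B_\invalpha\times B$ restriction) and that no in-$\Game'$ successor of an $\invalpha$-vertex lies outside $B$ (which would contradict $B$ being a bottom SCC), so that strong connectivity of $(B,E')$ is neither broken nor spuriously helped. Reconciling these two views of the edges, rather than any single routine verification, is where the argument must be precise.
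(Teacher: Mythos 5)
The first part of your argument --- verifying that $B$ satisfies the tangle definition --- is sound and follows essentially the same route as the paper: strong connectivity from the bottom-SCC property, $\sigma(v)\in B$ for $\alpha$-vertices, the winning condition from Lemma~\ref{lemma:alpha_wins}, and $\pr(B)=p$ from Lemma~\ref{lemma:play_to_p} together with $p$ being the top priority of the subgame. Your care about the two edge relations ($E$ versus the subgame relation $E'$) is a reasonable elaboration of what the paper leaves implicit.

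However, your treatment of \emph{uniqueness} misses the point of the lemma entirely. You read ``unique'' as ``well-defined and pairwise disjoint from the other bottom SCCs of the same decomposition,'' and dispose of it by noting that bottom SCCs are vertex-disjoint. In the paper, ``unique'' means that $B$ is a tangle that has \emph{not been learned before}, i.e., $B\notin T$. This is the content that the termination argument (Lemma~\ref{lemma:search_dominion}) actually consumes: each pass of \texttt{search} that does not find a dominion must add at least one genuinely new tangle to $T$, and since there are only finitely many tangles the algorithm terminates. The paper's argument for this is short but essential: if $B$ were already in $T$, then by Lemma~\ref{lemma:E_t_higher} all of its escape vertices $E_T(B)$ lie in higher $\alpha$-regions of the current decomposition, so by Lemma~\ref{lemma:alpha_maximal} the tangle attractor $\TAttr$ would have absorbed $B$ into one of those higher regions when it was computed --- contradicting the fact that $B$ survives as a bottom SCC of a lower region. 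Without some argument of this shape, your proof establishes that $B$ is a $p$-tangle but not that it is new, and the overall termination proof collapses.
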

\begin{proof}
	All $\alpha$-vertices $v$ in $B$ have a strategy $\sigma(v)\in B$, since $B$ is a bottom SCC when restricted by $\sigma$.
	$B$ is strongly connected by definition.
	Per Lemma~\ref{lemma:alpha_wins}, player $\alpha$ wins all plays consistent with $\sigma$ in the region and therefore also in $B$.
	Thus, $B$ is a tangle.
	Per Lemma~\ref{lemma:play_to_p}, player $\invalpha$ can always reach a vertex of priority $p$, therefore any bottom SCC must include a vertex of priority $p$.
	Since $p$ is the highest priority in the subgame, $B$ is a $p$-tangle.
	Furthermore, the tangle must be unique.
	If the tangle was found before, then per Lemma~\ref{lemma:alpha_maximal} and Lemma~\ref{lemma:E_t_higher},
	it would have been attracted to a higher $\alpha$-region.
	\qed
\end{proof}

\begin{lemma}
	\label{lemma:lowest_tangle}
	The lowest region in the decomposition always contains a tangle.
\end{lemma}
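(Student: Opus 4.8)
The plan is to exploit the fact that the lowest region is the last one produced by the inner \texttt{while} loop of Algorithm~\ref{alg:search}, and that it must therefore cover the entire remaining subgame. Assuming the decomposition reaches its lowest region (otherwise a dominion was already returned at line~9), I would first note that the inner loop terminates exactly when $\Game\setminus\r=\emptyset$. Since every computed region $Z$ is added to $\r$ at line~10, the last region must satisfy $Z=\Game'$, where $\Game'=\Game\setminus\r$ is the subgame at that iteration. Hence the lowest region is all of $\Game'$.

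Second, I would show that every subgame $\Game'$ arising in \texttt{search} is left-total, i.e.\ each of its vertices has a successor inside $\Game'$. The game passed to \texttt{search} is left-total, and $\Game'$ is obtained by successively deleting (tangle) attractor sets. Deleting an attractor set $Z$ computed for some player preserves left-totality: a vertex $v\notin Z$ violates the attractor conditions, so if $v$ belongs to that player it has no successor in $Z$, and otherwise it has a successor outside $Z$; in both cases $v$ retains a successor in $\Game\setminus Z$. Induction over the deleted regions then yields left-totality of $\Game'$.

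Third, I would apply the tangle extraction of Section~\ref{sec:extract_tangles} to this region. Because the region equals $\Game'$, there is no lower region to escape into, so the reduction $\nu Z\,.\,A\cap(\dots)$ deletes nothing: every $\invalpha$-vertex keeps all its $\Game'$-successors (which lie in $\Game'$), and every $\alpha$-vertex keeps its $\sigma$-move, which the attractor placed inside the region. Left-totality is exactly what makes this last point work even for the priority-$p$ vertices of $A$: each has a successor in $\Game'$, so the backward search assigns it a strategy into the region. Hence the reduced region equals $\Game'$, and in its restriction by $\sigma$ every vertex has out-degree at least one.

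Finally, a finite digraph with minimum out-degree one contains a cycle, and any sink of its condensation is closed under successors and therefore again has minimum out-degree one, so it contains a cycle and is a nontrivial SCC. Thus the reduced region possesses a nontrivial bottom SCC, which by Lemma~\ref{lemma:unique_tangle} is a $p$-tangle, proving the claim. The step I expect to be the crux is the third paragraph: the naive worry is a vertex of the lowest region all of whose successors lie in higher regions, leaving the reduction or the strategy ill-defined, and establishing left-totality of $\Game'$ is precisely what rules this out.
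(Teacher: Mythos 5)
Your proof is correct and follows essentially the same route as the paper's: the lowest region is the entire remaining subgame, the reduction in \texttt{extract-tangles} removes nothing since there are no lower regions, and left-totality of the subgame (preserved because only $\alpha$-maximal attractor sets are removed, which is what the paper invokes Lemma~\ref{lemma:alpha_maximal} for) guarantees every vertex keeps a successor in the region, yielding a nontrivial bottom SCC. You merely spell out in full the steps the paper compresses into two sentences.
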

\begin{proof}
	The lowest region is always nonempty after reduction in \texttt{extract-tangles},
	as there are no lower regions.
	Furthermore, this region contains nontrivial bottom SCCs,
	since every vertex must have a successor in the region due to Lemma~\ref{lemma:alpha_maximal}.
	\qed
\end{proof}

\begin{lemma}
	\label{lemma:tangle_dominion}
	A tangle $t$ is a dominion if and only if $E_T(t)=\emptyset$
\end{lemma}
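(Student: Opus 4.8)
The plan is to prove the two directions of the biconditional separately, making Observation~1 precise for the forward implication and giving a one-move escape argument for the converse. Throughout, let $t$ be a $p$-tangle for player $\alpha\equiv_2 p$ with witness strategy $\sigma_T(t)\colon t\cap V_\alpha\to t$ and associated graph $(t,E')$, where $E' := E\cap\big(\sigma_T(t)\cup((t\cap V_\invalpha)\times t)\big)$.

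For the direction $E_T(t)=\emptyset \implies t$ is a dominion, I would take $\sigma_T(t)$ itself as the witnessing $\alpha$-strategy. First I would show every play $\pi\in\textrm{Plays}(t,\sigma_T(t))$ stays in $t$: at an $\alpha$-vertex the play follows $\sigma_T(t)$, which maps into $t$ by definition of a tangle; at an $\invalpha$-vertex every outgoing edge stays in $t$, precisely because $E_T(t)=\emptyset$. Hence each step of $\pi$ lies in $E'$ (the $\alpha$-moves in $\sigma_T(t)$, the $\invalpha$-moves in $(t\cap V_\invalpha)\times t$), so $\pi$ is an infinite path in the finite graph $(t,E')$. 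Next I would show $\pi$ is won by $\alpha$. Set $h:=\pr(\inf(\pi))$ and pick $w\in\inf(\pi)$ with $\pr(w)=h$. After some finite prefix the play visits only vertices of $\inf(\pi)$; the segment between two consecutive visits of $w$ after that point is a cycle of $(t,E')$ whose vertices all lie in $\inf(\pi)$, hence have priority at most $h$, so its highest priority is exactly $h$. Since player $\alpha$ wins all cycles in $(t,E')$, this $h$ has the parity of $\alpha$, i.e.\ $\alpha\equiv_2 h$, and therefore $\pi$ is won by $\alpha$. Together with the containment argument, $t$ is a $p$-dominion; this is exactly Observation~1.

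For the converse I would argue by contraposition: assume $E_T(t)\neq\emptyset$ and exhibit a play that blocks $t$ from being a dominion. By definition of $E_T(t)$ there is an $\invalpha$-vertex $u\in t\cap V_\invalpha$ with a successor $v\in V\setminus t$. Since $u$ is owned by $\invalpha$, no $\alpha$-strategy $\sigma$ constrains the move at $u$, so any play starting at $u$ that first moves $u\to v$ is consistent with $\sigma$ and already leaves $t$ (and can be completed to an infinite play, as $E$ is left-total). Thus for no $\alpha$-strategy do all plays in $\textrm{Plays}(t,\sigma)$ stay in $t$, so $t$ is not a dominion.

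The main obstacle is the winning argument in the first direction: translating ``player $\alpha$ wins all cycles in $(t,E')$'' into ``$\pr(\inf(\pi))$ has the parity of $\alpha$'' for an arbitrary infinite play. The crux is to extract from $\inf(\pi)$ an actual cycle of $(t,E')$ whose highest priority is exactly $h=\pr(\inf(\pi))$; once the play is known to follow only $E'$-edges this is a routine first-return argument, but it must be phrased carefully because the winner depends only on the single highest recurrent priority, not on any individual cycle in isolation.
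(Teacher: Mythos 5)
Your proposal is correct and follows essentially the same route as the paper's (two-sentence) proof: player $\invalpha$ escaping witnesses non-dominion-hood in one direction, and in the other direction $E_T(t)=\emptyset$ traps $\invalpha$ inside the tangle, where every play is won by $\alpha$ by the tangle's defining cycle property. You merely make explicit the standard first-return argument (that a play confined to $(t,E')$ has $\pr(\inf(\pi))$ realised as the top priority of a cycle of $(t,E')$) which the paper treats as immediate from the definition of a tangle.
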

\begin{proof}
	If the tangle is a dominion, then player $\invalpha$ cannot leave it, therefore $E_T(t)=\emptyset$.
	If $E_T(t)=\emptyset$, then player $\invalpha$ cannot leave the tangle and since all plays consistent with $\sigma$ in the tangle are won by player $\alpha$, the tangle is a dominion.
	\qed
\end{proof}

\begin{lemma}
	\label{lemma:highest_dominion}
	$E_T(t)=\emptyset$ for every tangle $t$ found in the highest region of player $\alpha$.
\end{lemma}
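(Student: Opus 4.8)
The plan is to derive this as an almost immediate corollary of Lemma~\ref{lemma:E_t_higher}, which already pins down where the escape edges of a freshly extracted tangle can go. The only work is to argue that, for the highest region belonging to player $\alpha$, the set of ``higher $\alpha$-regions'' referenced by that lemma is empty.

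First I would recall the structure of the decomposition produced by \texttt{search}: the regions are computed in strictly decreasing order of priority, starting from $p=\pr(\Game')$ in the current subgame, and each region is associated with a unique priority $p$ (recorded in $\r$). Consequently the phrase ``the highest region of player $\alpha$'' is well-defined: it is the region of maximal priority $p$ with $p\equiv_2\alpha$. By construction every region lying strictly above it has priority greater than $p$, and any such priority is not congruent to $\alpha$ modulo $2$; hence every strictly higher region is a region of the opponent $\invalpha$, and there is no $\alpha$-region above the highest one.

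Next I would apply Lemma~\ref{lemma:E_t_higher} to any tangle $t$ extracted from this region. That lemma states that $E_T(t)$ consists solely of vertices lying in higher $\alpha$-regions. Since we have just observed that there are no higher $\alpha$-regions for the highest $\alpha$-region, the set of admissible targets is empty, forcing $E_T(t)=\emptyset$. If desired, one can then invoke Lemma~\ref{lemma:tangle_dominion} to conclude that such a tangle is in fact a dominion, which is the fact the algorithm ultimately exploits.

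There is no genuine obstacle here; the statement is a direct consequence of Lemma~\ref{lemma:E_t_higher} combined with the priority-ordering of the decomposition. The only point requiring a word of care is the bookkeeping justification that ``highest region of player $\alpha$'' leaves no room for a still-higher $\alpha$-region above it, which follows because consecutive regions have strictly decreasing and distinct priorities and parity is determined by the priority.
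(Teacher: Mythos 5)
Your proposal is correct and matches the paper's proof: both invoke Lemma~\ref{lemma:E_t_higher} to conclude that $E_T(t)$ can only contain vertices of higher $\alpha$-regions, and then observe that the highest $\alpha$-region has none above it, so $E_T(t)=\emptyset$. The extra bookkeeping you supply about the decomposition's strictly decreasing, parity-determined priorities is a harmless elaboration of what the paper leaves implicit.
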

\begin{proof}
	Per Lemma~\ref{lemma:E_t_higher}, $E_T(t)\subseteq\{\,v\in\dom(\r)\mid\r(v)\equiv_2 p\,\}$ when the tangle is found.
	There are no higher regions of player $\alpha$, therefore $E_T(t)=\emptyset$.
	\qed
\end{proof}

\begin{lemma}
	\label{lemma:search_dominion}
	The \texttt{search} algorithm terminates by finding a dominion.
\end{lemma}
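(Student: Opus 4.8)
The plan is to split the statement into two independent claims: whenever \texttt{search} executes the \texttt{return}, it hands back a dominion; and \texttt{search} is guaranteed to reach that \texttt{return}. The first claim is immediate from Lemma~\ref{lemma:tangle_dominion}. The only \texttt{return} is at line~9, and it fires exactly when some extracted tangle $t\in A$ satisfies $E_T(t)=\emptyset$; by Lemma~\ref{lemma:tangle_dominion} such a $t$ is a dominion. So the whole difficulty is termination, which I would establish by exhibiting a quantity that strictly increases across outer iterations yet is bounded, namely the size $|T|$ of the accumulated tangle set.

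First I would verify that the inner \texttt{while} loop (lines~4--10) always terminates. On each pass it computes a region $Z$ for the top priority $p$ of the current subgame $\Game'=\Game\setminus\r$; since $\Game'\neq\emptyset$ this $p$ exists and $Z\supseteq\{\,v\in\Game'\mid\pr(v)=p\,\}\neq\emptyset$. Recording $Z$ at line~10 strictly enlarges $\dom(\r)$, so $\Game\setminus\r$ loses a nonempty set of vertices each pass and the loop halts after at most $|V|$ iterations, having produced a complete top-down decomposition of $\Game$.

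Next I would argue that every decomposition that finishes \emph{without} triggering the \texttt{return} strictly grows $T$. If the inner loop exits normally then no extracted tangle had $E_T(t)=\emptyset$. By Lemma~\ref{lemma:lowest_tangle} the lowest region still contains a nontrivial bottom SCC, and by Lemma~\ref{lemma:unique_tangle} every such SCC is a $p$-tangle that is unique, i.e. not already a member of $T$. Hence $Y$ acquires at least one tangle absent from $T$, and the merge at line~11 makes $|T|$ strictly larger. Here I would note that the comparison in Lemma~\ref{lemma:unique_tangle} stays valid throughout the pass, since newly found tangles are only folded into $T$ at line~11, and distinct tangles found within a single pass live in disjoint regions and so are never conflated.

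Finally I would close with a finiteness argument. Because $V$ is finite there are only finitely many possible tangles, so $|T|$ is bounded. Each outer iteration therefore either returns a dominion or strictly increases the bounded quantity $|T|$; the latter can happen only finitely often, so the \texttt{return} at line~9 is eventually reached, and by the first paragraph the value returned is a dominion. I expect the main obstacle to be precisely the third step — certifying that a fruitless decomposition nonetheless makes genuine progress — which hinges entirely on combining Lemma~\ref{lemma:lowest_tangle} (the lowest region is nonempty and contains a bottom SCC) with Lemma~\ref{lemma:unique_tangle} (every such SCC is a fresh tangle); the termination of the inner loop and the finiteness bound are then routine.
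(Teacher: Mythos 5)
Your proposal is correct and follows essentially the same route as the paper's proof: both reduce termination to the observation that any decomposition which fails to return must, by Lemma~\ref{lemma:lowest_tangle} and Lemma~\ref{lemma:unique_tangle}, add at least one tangle not already in $T$, and then conclude by the finiteness of the set of possible tangles, with Lemma~\ref{lemma:tangle_dominion} certifying that the returned tangle is a dominion. The only cosmetic difference is that the paper additionally frames the argument around the highest nonempty region (citing Lemma~\ref{lemma:highest_dominion}), whereas you argue directly from the boundedness of $\lvert T\rvert$; both are sound.
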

\begin{proof}
	There is always a highest region of one of the players that is not empty.
	If a tangle is found in this region, then it is a dominion (Lemma~\ref{lemma:tangle_dominion}, Lemma~\ref{lemma:highest_dominion})
	and Algorithm~\ref{alg:search} terminates (line~9).
	If no tangle is found in this region, then the opponent can escape to a lower region.
	Thus, if no dominion is found in a highest region, then there is a lower region that contains a tangle (Lemma~\ref{lemma:lowest_tangle}) that must be unique (Lemma~\ref{lemma:unique_tangle}).
	As there are only finitely many unique tangles, eventually a dominion must be found.
	\qed
\end{proof}

\begin{lemma}
	\label{lemma:solve_solves}
	The \texttt{solve} algorithm solves parity games.
\end{lemma}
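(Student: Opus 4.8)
The plan is to prove termination and correctness of \texttt{solve} together by induction on the number of vertices $|V|$ of $\Game$, with Lemma~\ref{lemma:search_dominion} driving each pass of the outer loop. For termination, each call to \texttt{search} returns a tangle $d$ with $E_T(d)=\emptyset$ (Lemma~\ref{lemma:search_dominion}), which is a dominion by Lemma~\ref{lemma:tangle_dominion}; since $d$ is nonempty and $D:=\Attr^\Game_\alpha(d)\supseteq d$, every iteration deletes a nonempty set $D$, so after at most $|V|$ iterations $\Game=\emptyset$ and the loop halts. The winner is $\alpha\equiv_2\pr(d)$, exactly as assigned on line~5.

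The first substantial step is to show that $D=\Attr^\Game_\alpha(d)$ is an $\alpha$-dominion of the \emph{full} game, won by the strategy $\sigma_T(d)\cup\sigma$ recorded on line~7 ($\sigma_T(d)$ inside $d$, the attractor strategy $\sigma$ on $D\setminus d$). I would argue two closure facts: (i)~every $\invalpha$-vertex of $D$ has all successors in $D$, for vertices of $d$ because $E_T(d)=\emptyset$, and for vertices of $D\setminus d$ because the attractor only absorbs an $\invalpha$-vertex once \emph{all} its successors already lie in the attracted set; and (ii)~from any $v\in D\setminus d$ the attractor strategy forces the play into $d$ in finitely many steps. Consequently every play from $D$ consistent with the combined strategy stays in $D$, is eventually trapped in $d$, and satisfies $\pr(\inf(\pi))\equiv_2\alpha$, hence is won by $\alpha$.

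Next I treat the remaining iterations as solving the subgame $\Game\setminus D$ and apply the induction hypothesis. First I check that this subgame is well-formed (left-total): an $\invalpha$-vertex outside $D$ keeps a successor outside $D$ (otherwise all its successors lie in $D$ and it would have been attracted), and an $\alpha$-vertex outside $D$ has \emph{no} successor in $D$ at all (a single such successor would have attracted it). The latter fact also yields winning-region preservation: since $\alpha$ can never move from $V\setminus D$ into $D$, player $\invalpha$ cannot be forced out of its winning region of $\Game\setminus D$, while $\invalpha$ voluntarily entering $D$ only helps $\alpha$. By induction, $\Game\setminus D$ (fewer vertices) is solved correctly into $W'_\Even,W'_\Odd$ with winning strategies; I then assemble $W_\alpha=D\cup W'_\alpha$ and $W_\invalpha=W'_\invalpha$, combining strategies over the disjoint vertex sets $D$ and $V\setminus D$, and verify each assembled strategy wins in $\Game$ using the trapping argument above together with the impossibility of $\alpha$ leaking into $D$.

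The main obstacle is precisely this dominion-removal argument: showing that deleting $D$ neither creates dead-ends nor changes which player wins any surviving vertex. Both parts rest entirely on the maximality and closure properties of $\Attr^\Game_\alpha$ — $\alpha$-vertices outside $D$ have no edge into $D$, and $\invalpha$-vertices inside $D$ have no edge out of $D$ — so the effort goes into stating and invoking these closure properties cleanly rather than into any calculation. The genuinely routine parts are termination and the strategy bookkeeping across iterations, where disjointness of the successive $D$'s guarantees that the accumulated $\sigma_\alpha$ remains a well-defined partial function.
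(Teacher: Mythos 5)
Your proposal is correct and follows the same route as the paper: each \texttt{search} call yields a dominion (Lemma~\ref{lemma:search_dominion}), its $\alpha$-attractor is again a dominion won via $\sigma_T(d)\cup\sigma$, and the remainder is solved by iterating on the smaller game. You merely make explicit the standard facts the paper leaves implicit (termination in at most $|V|$ iterations, left-totality of $\Game\setminus D$, and preservation of winning regions after removing an $\alpha$-attractor), all of which you justify correctly.
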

\begin{proof}
	Every invocation of \texttt{search} returns a dominion of the game (Lemma~\ref{lemma:search_dominion}).
	The $\alpha$-attractor of a dominion won by player $\alpha$ is also a dominion of player $\alpha$.
	Thus all vertices in $D$ are won by player $\alpha$.
	The winning strategy is derived as the witness strategy of $d$ with the strategy obtained by the attractor at line~6.
	At the end of \texttt{solve} every vertex of the game is either in $W_\Even$ or $W_\Odd$.
	\qed
\end{proof}

\subsection{Variations of tangle learning}

We propose three different variations of tangle learning that can be combined.

The first variation is \emph{alternating tangle learning}, where players take turns to maximally learn tangles, i.e., in a turn of player Even, we only search for tangles in regions of player Even, until no more tangles are found.
Then we search only for tangles in regions of player Odd, until no more tangles are found.
When changing players, the last decomposition can be reused.

The second variation is \emph{on-the-fly tangle learning}, where new tangles immediately refine the decomposition.
When new tangles are found, the decomposition procedure is reset to the highest region that attracts one of the new tangles, such that all regions in the top-down decomposition remain $\alpha$-maximal.
This is the region with priority $p := \max \{\; \min \{\; \r(v) \mid v\in E_T(t) \;\} \mid t\in A \;\}$.



A third variation skips the reduction step in \texttt{extract-tangles} and only extracts tangles from regions where none of the vertices of priority $p$ can escape to lower regions.
This still terminates finding a dominion, as Lemma~\ref{lemma:lowest_tangle} still applies, i.e., we always extract tangles from the lowest region.
Similar variations are also conceivable, such as only learning tangles from the lowest region.

\section{Complexity}
\label{sec:complexity}

We establish a relation between the time complexity of tangle learning and the number of 
 \emph{learned} tangles.

%


\begin{lemma}
\label{lemma:complex_attract}
Computing the top-down $\alpha$-maximal decomposition of a parity game runs in time $O(dm+dn\lvert T\rvert)$ given a parity game with $d$ priorities, $n$ vertices and $m$ edges, and a set of tangles $T$.
\end{lemma}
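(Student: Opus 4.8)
The plan is to analyze a single pass of the inner \textbf{while} loop of \texttt{search} (lines~4--10), which is exactly what produces the top-down $\alpha$-maximal decomposition, as a sequence of tangle-attractor computations. I would bound the total running time as (number of regions) $\times$ (cost of computing one region), and argue that these two factors are $O(d)$ and $O(m + n\lvert T\rvert)$ respectively. First I would bound the number of regions: as already observed, every successive region has a strictly lower top priority, so each region carries a distinct priority $p\in\{0,\dotsc,d\}$ and a single decomposition produces at most $d+1 = O(d)$ regions, hence at most $O(d)$ calls to $\TAttr$ on line~7 (and $O(d)$ calls to \texttt{extract-tangles} on line~8).

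Next I would bound the cost of one $\TAttr^{\Game',T'}_\alpha$ computation, splitting it into its vertex part and its tangle part. The vertex part is the ordinary $\alpha$-attractor, computed by a worklist-driven backward search: each vertex enters $Z$ at most once, and when it does we scan its incoming edges to attract $\alpha$-predecessors and to decrement the ``successor outside $Z$'' counters of the $\invalpha$-predecessors. Since each edge is inspected $O(1)$ times, this part costs $O(m)$, bounding the subgame $\Game'$ by the full game; the \texttt{extract-tangles} step (a greatest-fixpoint reduction by backward search followed by a bottom-SCC computation) is likewise $O(m)$ and is absorbed into this budget. For the tangle part the crucial point is to avoid rescanning every tangle at every fixpoint step. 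I would maintain, for each tangle $t\in T'_\alpha$, a counter equal to $\lvert E_T(t)\setminus Z\rvert$ together with a reverse index sending each vertex $v$ to the tangles $t$ with $v\in E_T(t)$; both are initialized in $O\big(\sum_t\lvert E_T(t)\rvert\big) = O(n\lvert T\rvert)$, using $\lvert E_T(t)\rvert\le n$. Whenever a vertex $v$ is added to $Z$, I decrement the counters of the tangles indexed by $v$ and attract $t$ (adding its $\le n$ vertices and merging $\sigma_T(t)$ for vertices not yet in $\dom(\sigma)$) as soon as its counter reaches $0$ with $E_T(t)\neq\emptyset$. Each pair $(t,v)$ with $v\in E_T(t)$ is then touched at most once per region, so all counter updates cost $O(n\lvert T\rvert)$, and since each tangle is attracted at most once its vertex and strategy insertions add a further $O(n\lvert T\rvert)$. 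Thus one region costs $O(m + n\lvert T\rvert)$, and multiplying by the $O(d)$ regions yields the claimed $O(dm + dn\lvert T\rvert)$.

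The main obstacle is precisely this tangle-part accounting. A literal reading of the least fixpoint $\mu Z$ would re-test the whole set $T'_\alpha$ in each of up to $n$ iterations, at cost up to $O(n)$ per tangle, giving a much weaker $O(n^2\lvert T\rvert)$ per region. The counter-plus-reverse-index implementation is what charges each escape target $v\in E_T(t)$ to $Z$ only once, collapsing the per-region tangle work to $O(n\lvert T\rvert)$. In writing this up I would be careful to verify that the reverse index can be built once and reused across regions within the same asymptotic budget, that a tangle straddling the frontier (with an escape target lying in an already-fixed higher region, hence never entering $Z$) correctly keeps a positive counter and is simply not attracted here, and that the overlap of tangles is handled by the ``merge only undefined strategy entries'' rule specified for $\TAttr$, so that the $O(n)$ insertion cost per attracted tangle is not exceeded.
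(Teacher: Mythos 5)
Your proposal is correct and follows essentially the same route as the paper: bound the decomposition by at most $d$ regions, each costing one $\TAttr$ computation of $O(m+n\lvert T\rvert)$ implemented with remaining-successor counters for vertices and analogous per-tangle escape counters. The paper's own proof merely asserts that $\TAttr$ runs in $O(n+m+\lvert T\rvert+n\lvert T\rvert)$ ``if implemented in a similar style'' to the standard counter-based attractor; your counter-plus-reverse-index accounting is exactly the implementation detail that justifies that claim.
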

\begin{proof}
The attractor $\Attr^{\Game}_\alpha$ runs in time $O(n+m)$, if we record the number of remaining outgoing edges for each vertex~\cite{Verver2013}.
The attractor $\TAttr^{\Game,T}_\alpha$ runs in time $O(n+m+\lvert T\rvert+n\lvert T\rvert)$,
if implemented in a similar style.
As $m \geq n$, we simplify to $O(m+n\lvert T\rvert)$.
Since the decomposition computes at most $d$ regions, the decomposition runs in time $O(dm+dn\lvert T\rvert)$.
\qed
\end{proof}

\begin{lemma}
\label{lemma:complex_extract}
Searching for tangles in the decomposition runs in time $O(dm)$.
\end{lemma}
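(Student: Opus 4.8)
The plan is to bound the work done by \texttt{extract-tangles} on a single region and then multiply by the number of regions. First I would recall that the top-down decomposition assigns each region a distinct priority $p$ that strictly decreases as the decomposition proceeds, so there are at most $d$ regions, and searching for tangles in the decomposition amounts to invoking \texttt{extract-tangles} once per region. It therefore suffices to show that one such call runs in time $O(m)$.

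A call to \texttt{extract-tangles} has two phases. The first is the reduction step, the greatest fixpoint $\nu Z \,.\, A \cap (\{v\in V_\invalpha \mid E'(v)\subseteq Z\} \cup \{v\in V_\alpha \mid \sigma(v)\in Z\})$. I would implement this as a backward search starting from the priority-$p$ vertices of the region that escape to lower regions, using the same outgoing-edge-counter bookkeeping as the ordinary attractor (cf. the proof of Lemma~\ref{lemma:complex_attract} and~\cite{Verver2013}): an $\invalpha$-vertex is removed once its counter of successors still in $Z$ reaches zero, and an $\alpha$-vertex is removed as soon as its single strategy successor $\sigma(v)$ leaves $Z$. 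Each vertex is removed at most once and each edge is inspected a constant number of times, so this phase runs in $O(n+m)$.

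The second phase computes the nontrivial bottom SCCs of the reduced region restricted by $\sigma$. Here each $\alpha$-vertex contributes only its single $\sigma$-edge while each $\invalpha$-vertex contributes all of its in-region edges, so the restricted graph has at most $n$ vertices and $m$ edges, and one run of Tarjan's algorithm identifies all bottom SCCs in $O(n+m)$. Combining the two phases, a single call to \texttt{extract-tangles} costs $O(n+m)=O(m)$ since $m\geq n$, and over the at most $d$ regions the total is $O(dm)$.

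The only delicate point I expect is justifying that the reduction is genuinely linear: I must argue that the counter-based backward search computes exactly the stated greatest fixpoint and that no vertex or edge is processed more than a bounded number of times, mirroring the analysis already used for $\Attr$ and $\TAttr$ in Lemma~\ref{lemma:complex_attract}. I would also note that, because the regions are pairwise disjoint, summing the per-region linear costs in fact yields the sharper bound $O(m)$; the stated $O(dm)$ follows immediately and suffices for the complexity analysis.
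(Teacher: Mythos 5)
Your proof is correct and follows essentially the same route as the paper: a linear-time backward search for the reduction step, a linear-time Tarjan SCC computation, at most $d$ regions, and the simplification $m\geq n$ to obtain $O(dm)$. The extra observations (the counter-based implementation detail and the sharper $O(m)$ bound from disjointness of regions) are sound but not needed for the stated bound.
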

\begin{proof}
The \texttt{extract-tangles} procedure consists of a backward search, which runs in $O(n+m)$, and an SCC search based on Tarjan's algorithm, which also runs in $O(n+m)$.
This procedure is performed at most $d$ times (for each region).
As $m \geq n$, we simplify to $O(dm)$.
\qed
\end{proof}

\begin{lemma}
Tangle learning runs in time $O(dnm\lvert T\rvert+dn^2\lvert T\rvert^2)$ for a parity game with $d$ priorities, $n$ vertices, $m$ edges, and $\lvert T\rvert$ \emph{learned} tangles.
\end{lemma}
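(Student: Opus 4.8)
The plan is to combine the per-decomposition cost bounds of Lemmas~\ref{lemma:complex_attract} and~\ref{lemma:complex_extract} with a bound on how many full decompositions the algorithm ever performs. A single iteration of the outer loop of \texttt{search} consists of one top-down $\alpha$-maximal decomposition, costing $O(dm+dn\lvert T\rvert)$ by Lemma~\ref{lemma:complex_attract}, together with the tangle extraction on each region, costing $O(dm)$ by Lemma~\ref{lemma:complex_extract}. Since the latter is dominated by the former, one such iteration costs $O(dm+dn\lvert T\rvert)$ (using the total $\lvert T\rvert$ as an upper bound on the current tangle set, which only grows within a call). The whole time bound then reduces to counting decompositions and multiplying.

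First I would bound the number of decompositions performed inside a single call to \texttt{search}. Each non-terminating iteration of the outer loop adds the set $Y$ of newly found tangles to $T$ (line~11), and by Lemma~\ref{lemma:lowest_tangle} the lowest region always yields at least one tangle, which by Lemma~\ref{lemma:unique_tangle} is \emph{new}: had it been in $T$ already, it would have been attracted to a higher region. Hence every non-terminating decomposition enlarges $T$ by at least one tangle, so a single \texttt{search} call performs at most $\lvert T\rvert+1$ decompositions before the terminating one that returns a dominion (Lemma~\ref{lemma:search_dominion}).

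Next I would bound the number of calls to \texttt{search}. Each iteration of the \texttt{solve} loop attracts and removes the dominion $D\supseteq d$ at line~6, and since the returned tangle $d$ is nonempty, $D$ removes at least one vertex; thus \texttt{solve} makes at most $n$ iterations and hence at most $n$ calls to \texttt{search}. Combining, the total number of decompositions over the whole run is $O(n\lvert T\rvert)$, so the total decomposition-and-extraction time is $O\big(n\lvert T\rvert\,(dm+dn\lvert T\rvert)\big)=O(dnm\lvert T\rvert+dn^2\lvert T\rvert^2)$. The remaining work in \texttt{solve}, namely the ordinary attractor $\Attr^\Game_\alpha$ at line~6, costs $O(n+m)$ per iteration and so $O(nm)$ in total, which is dominated. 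This yields the claimed bound.

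The main obstacle is the per-call decomposition count: one must argue carefully that each non-terminating pass genuinely learns a \emph{fresh} tangle rather than rediscovering an old one, which is exactly what Lemmas~\ref{lemma:unique_tangle} and~\ref{lemma:lowest_tangle} supply. The factor $n$ coming from the number of \texttt{search} calls is a deliberately crude over-count, effectively re-counting decompositions across calls; it suffices to reach the stated bound, whereas a tighter analysis exploiting that learned tangles are never rediscovered \emph{across} calls as well would save a factor of $n$.
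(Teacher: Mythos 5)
Your proposal is correct and follows essentially the same route as the paper: per-iteration cost $O(dm+dn\lvert T\rvert)$ from Lemmas~\ref{lemma:complex_attract} and~\ref{lemma:complex_extract}, at most $\lvert T\rvert$ iterations per \texttt{search} call because each non-terminating decomposition learns at least one fresh tangle, and at most $n$ calls to \texttt{search}. Your added justification of the freshness claim via Lemmas~\ref{lemma:lowest_tangle} and~\ref{lemma:unique_tangle} is a welcome elaboration of what the paper states in one line, but it is not a different argument.
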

\begin{proof}
Given Lemma~\ref{lemma:complex_attract} and Lemma~\ref{lemma:complex_extract}, each iteration in \texttt{search} runs in time $O(dm+dn\lvert T\rvert)$.
The number of iterations is at most $\lvert T\rvert$, since we learn at least~$1$ tangle per iteration.
Then \texttt{search} runs in time $O(dm\lvert T\rvert+dn\lvert T\rvert^2)$.
Since each found dominion is then removed from the game, there are at most $n$ calls to \texttt{search}.
Thus tangle learning runs in time $O(dnm\lvert T\rvert+dn^2\lvert T\rvert^2)$.
\qed
\end{proof}

\begin{figure}[tb]
	\centering
	\scalebox{0.9}{
		\begin{tikzpicture}		
		\tikzset{every edge/.append style={>=stealth,->,solid,thick,draw,text height=0.5ex,text depth=0.2ex}}
		\tikzset{every node/.append style={minimum size=6mm,draw,fill=black!10}}
		\tikzset{my label/.style args={#1:#2}{
				append after command={
					($(\tikzlastnode.center)$) coordinate [label={[label distance=5mm,black]#1:\textbf{\strut #2}}]
		}}}
		\tikzstyle{even}=[diamond,minimum size=6mm,draw]
		\tikzstyle{odd}=[regular polygon,regular polygon sides=4,minimum size=6mm,draw]
		\tikzstyle{seven}=[even,fill=white]
		\tikzstyle{sodd}=[odd,fill=white]
		
		\draw (0.0,1.5)   node[seven, my label={above:a}] (a) {0};
		\draw (-1.5,1.5)  node[seven, my label={above:b}] (b) {2};
		\draw (-1.5,0)    node[seven, my label={below:c}] (c) {1};
		\draw (0.0,0)     node[seven, my label={below:d}] (d) {3};
		\draw (1.5,1.5)   node[sodd,  my label={above:e}] (e) {0};
		\draw (3.0,1.5)   node[sodd,  my label={above:f}] (f) {4};
		\draw (3.0,0.0)   node[sodd,  my label={below:g}] (g) {1};
		\draw (1.5,0.0)   node[sodd,  my label={below:h}] (h) {3};
		\draw (a) edge (b);
		\draw (a) edge [bend left=20] (e);
		\draw (e) edge [bend left=20] (a);
		\draw (b) edge (c);
		\draw (c) edge (d);
		\draw (d) edge (a);
		\draw (e) edge (f);
		\draw (f) edge (g);
		\draw (g) edge (h);
		\draw (h) edge (e);
		\draw (c) edge[in=-145,out=145,loop,looseness=7] (c);
		\draw (g) edge[in=-35,out=35,loop,looseness=4] (g);
		\end{tikzpicture}
	}
	\caption{A parity game that requires several turns to find a dominion.}
	\label{fig:4turns}
\end{figure}

The complexity of tangle learning follows from the number of tangles that are learned before each dominion is found.
Often not all tangles in a game need to be learned to solve the game, only certain tangles.
Whether this number can be exponential in the worst case is an open question.
These tangles often serve to remove \emph{distractions} that prevent the other player from finding better tangles.
This concept is illustrated by the example in Figure~\ref{fig:4turns}, which requires multiple turns before a dominion is found.
The game contains 4 tangles:
$\{\,\textbf{c}\,\}$, $\{\,\textbf{g}\,\}$ (a dominion), $\{\,\textbf{a},\textbf{b},\textbf{c},\textbf{d}\,\}$ and $\{\,\textbf{a},\textbf{e}\,\}$.
The vertices $\{\,\textbf{e},\textbf{f},\textbf{g},\textbf{h}\,\}$ do not form a tangle, since the opponent wins the loop of vertex \textbf{g}. 
The tangle $\{\,\textbf{a},\textbf{b},\textbf{c},\textbf{d}\,\}$ is a dominion in the remaining game after $\Attr^\Game_\Odd(\{\,\textbf{g}\,\})$ has been removed.

The tangle $\{\,\textbf{g}\,\}$ is not found at first, as player Odd is distracted by \textbf{h}, i.e., prefers to play from \textbf{g} to \textbf{h}.
Thus vertex \textbf{h} must first be attracted by the opponent. 
This occurs when player Even learns the tangle $\{\,\textbf{a},\textbf{e}\,\}$, which is then attracted to \textbf{f}, which then attracts \textbf{h}.
However, the tangle $\{\,\textbf{a},\textbf{e}\,\}$ is blocked, as player Even is distracted by \textbf{b}.
Vertex \textbf{b} is attracted by player Odd when they learn the tangle $\{\,\textbf{c}\,\}$, which is attracted to \textbf{d}, which then attracts \textbf{b}.
So player Odd must learn tangle $\{\,\textbf{c}\,\}$ so player Even can learn tangle $\{\,\textbf{a},\textbf{e}\,\}$, which player Even must learn so player Odd can learn tangle $\{\,\textbf{g}\,\}$ and win the dominion $\{\,\textbf{e},\textbf{f},\textbf{g},\textbf{h}\,\}$, after which player Odd also learns $\{\,\textbf{a},\textbf{b},\textbf{c},\textbf{d}\,\}$ and wins the entire game.

One can also understand the algorithm as the players learning that their opponent can now play from some vertex $v$ via the learned tangle to a higher vertex $w$ that is won by the opponent.
In the example, we first learn that \textbf{b} actually leads to \textbf{d} via the learned tangle $\{\,\textbf{c}\,\}$.
Now \textbf{b} is no longer safe for player Even. 
However, player Even can now play from both \textbf{d} and \textbf{h} via the learned $0$-tangle $\{\,\textbf{a},\textbf{e}\,\}$ to \textbf{f}, so \textbf{d} and \textbf{h} are no longer interesting for player Odd and vertex \textbf{b} is again safe for player Even.

\section{Implementation}
\label{sec:implementation}

We implement four variations of tangle learning in the parity game solver Oink~\cite{Oink2018}.
Oink is a modern implementation of parity game algorithms written in C++.
Oink implements priority promotion~\cite{DBLP:conf/cav/BenerecettiDM16}, Zielonka's recursive algorithm~\cite{DBLP:journals/tcs/Zielonka98}, strategy improvement~\cite{DBLP:conf/cav/Fearnley17}, small progress measures~\cite{DBLP:conf/stacs/Jurdzinski00}, and quasi-polynomial time progress measures~\cite{DBLP:conf/spin/FearnleyJS0W17}.
Oink also implements self-loop solving and winner-controlled winning cycle detection, as proposed in~\cite{Verver2013}.
%
The implementation is publicly available via \url{https://www.github.com/trolando/oink}.

We implement the following variations of tangle learning: standard tangle learning (\texttt{tl}), alternating tangle learning (\texttt{atl}), on-the-fly tangle learning (\texttt{otftl}) and on-the-fly alternating tangle learning (\texttt{otfatl}).
The implementation mainly differs from the presented algorithm in the following ways.
We combine the \texttt{solve} and \texttt{search} algorithms in one loop.
We remember the highest region that attracts a new tangle and reset the decomposition to that region instead of recomputing the full decomposition.
In \texttt{extract-tangles}, we do not compute bottom SCCs for the highest region of a player, instead we return the entire reduced region as a single dominion (see also Lemma~\ref{lemma:highest_dominion}).



\section{Empirical evaluation}
\label{sec:evaluation}

The goal of the empirical evaluation is to study tangle learning and its variations on real-world examples and random games.
Due to space limitations, we do not report in detail on crafted benchmark families (generated by PGSolver~\cite{DBLP:conf/atva/FriedmannL09}), except that none of these games is difficult in runtime or number of tangles.

We use the parity game benchmarks from model checking and equivalence
checking proposed by Keiren~\cite{DBLP:conf/fsen/Keiren15} that are publicly available online.
These are $313$ model checking and $216$ equivalence checking games. 
We also consider random games, in part because the literature on parity games
tends to favor studying the behavior of algorithms on random games.
We include two classes of self-loop-free random games generated by PGSolver~\cite{DBLP:conf/atva/FriedmannL09} with a fixed number of vertices:
\begin{itemize}
	\item fully random games (\texttt{randomgame N N 1 N x}) \\
	$N \in \{\, 1000, 2000, 4000, 7000 \,\}$
	\item large low out-degree random games (\texttt{randomgame N N 1 2 x}) \\
	$N \in \{\, 10000, 20000, 40000, 70000, 100000, 200000, 400000, 700000, 1000000 \,\}$
\end{itemize}

We generate $20$ games for each parameter $N$, in total $80$ fully random games and $180$ low out-degree games.
All random games have $N$ vertices and up to $N$ distinct priorities.
We include low out-degree games,
since algorithms may behave differently on games where all vertices have few available moves,
as also suggested in~\cite{DBLP:conf/cav/BenerecettiDM16}.
In fact, as we see in the evaluation, fully random games appear trivial to solve, whereas games with few moves per vertex are more challenging.
Furthermore, the fully random games have fewer vertices but require more disk space (40~MB per compressed file for $N=7000$) than large low out-degree games (11~MB per compressed file for $N=1000000$).

We compare four variations of tangle learning to the implementations of Zielonka's recursive algorithm (optimized version of Oink) and of priority promotion (implemented in Oink by the authors of~\cite{DBLP:conf/cav/BenerecettiDM16}).
The motivation for this choice is that~\cite{Oink2018} shows that these are the fastest parity game solving algorithms.

In the following, we also use \emph{cactus plots} to compare the algorithms.
Cactus plots show that an algorithm solved $X$ input games within $Y$ seconds individually.

All experimental scripts and log files are available online via \url{https://www.
github.com/trolando/tl-experiments}.
The experiments were performed on a cluster of Dell PowerEdge M610 servers with two Xeon E5520 processors and 24~GB internal memory each.
The tools were compiled with gcc 5.4.0.

\subsection{Overall results}

\begin{table}[t]
	\centering
	\begin{tabu} to 0.7\linewidth {XXXX[0.8]X[0.8]}
		\toprule
		\textbf{Solver} & \textbf{MC\&EC} & \textbf{Random} & \multicolumn{2}{l}{\textbf{Random (large)}} \\
		&\small time&\small time&\small time&\small timeouts
		\\
		\midrule
%
  pp & 503 & 21 & 12770 & 6 \\ 
  zlk & 576 & 21 & 23119 & 13 \\ 
  otfatl & 808 & 21 & 2281 & 0 \\ 
  atl & 817 & 21 & 2404 & 0 \\ 
  otftl & 825 & 21 & 2238 & 0 \\ 
  tl & 825 & 21 & 2312 & 0 \\ 
  
		\bottomrule
	\end{tabu}
	\vspace{1em}
	\caption{Runtimes in sec. and number of timeouts (20 minutes) of the solvers Zielonka (\texttt{zlk}), priority promotion (\texttt{pp}), and tangle learning (\texttt{tl}, \texttt{atl}, \texttt{otftl}, \texttt{otfatl}).}
	\label{tbl:oink}
\end{table}

Table~\ref{tbl:oink} shows the cumulative runtimes of the six algorithms.
For the runs that timed out, we simply used the timeout value of $1200$ seconds, but this underestimates the actual runtime.

\subsection{Model checking and equivalence checking games}

\begin{figure}[t]
	\resizebox{\linewidth}{!}{
		\input{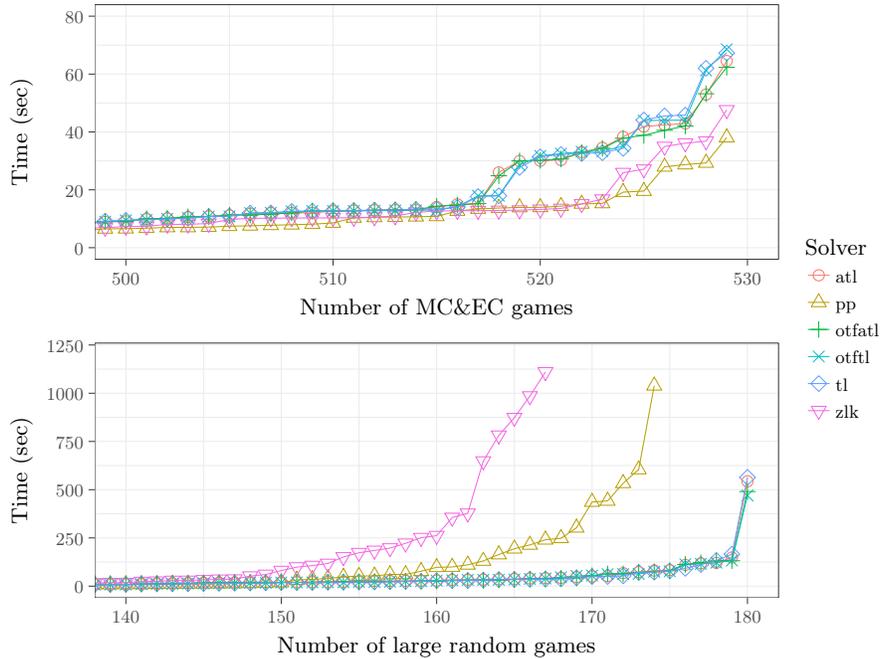}
	}
	\caption{Cactus plots of the solvers Zielonka (\texttt{zlk}), priority promotion (\texttt{pp}) and tangle learning (\texttt{tl}, \texttt{atl}, \texttt{otftl}, \texttt{otfatl}).
	The plot shows how many MC\&EC games (top) or large random games (bottom) are (individually) solved within the given time.
	}
	\label{fig:oink_rn2}
	\label{fig:oink_mceq}
\end{figure}

See Figure~\ref{fig:oink_mceq} for the cactus plot of the six solvers on model checking and equivalence checking games.
This graph suggests that for most games, tangle learning is only slightly slower than the other algorithms.
The tangle learning algorithms require at most $2\times$ as much time for $12$ of the $529$ games.
There is no significant difference between the four variations of tangle learning.

The $529$ games have on average $1.86$ million vertices and $5.85$ million edges, and at most $40.6$ million vertices and $167.5$ million edges.
All equivalence checking games have $2$ priorities,
so every tangle is a dominion. 
The model checking games have $2$ to $4$ priorities.
Tangle learning learns non-dominion tangles for only $30$ games, and more than $1$ tangle only for the $22$ games that check the \verb|infinitely_often_read_write| property.
The most extreme case is 1,572,864 tangles for a game with 19,550,209 vertices.
These are all $0$-tangles that are then attracted to become part of $2$-dominions.

That priority promotion and Zielonka's algorithm perform well is no surprise.
See also Sec.~\ref{sec:tangles_in_pp}.
Solving these parity games requires few iterations for all algorithms, but tangle learning spends more time learning and attracting individual tangles, which the other algorithms do not do.
Zielonka requires at most $27$ iterations, priority promotion at most $28$ queries and $9$ promotions.
%
Alternating tangle learning requires at most $2$ turns.
We conclude that these games are not complex and that their difficulty is related to their sheer size.





\subsection{Random games}

Table~\ref{tbl:oink} shows no differences between the algorithms for the fully random games.
Tangle learning learns no tangles except dominions for any of these games.
This agrees with the intuition that the vast number of edges in these games lets attractor-based algorithms quickly attract large portions of the game.


See Fig.~\ref{fig:oink_rn2} for a cactus plot of the solvers on the larger random games.
Only $167$ games were solved within $20$ minutes each by Zielonka's algorithm and only $174$ games by priority promotion.
See Table~\ref{tbl:hardrandom} for details of the slowest $10$ random games for alternating tangle learning.
There is a clear correlation between the runtime, the number of tangles and the number of turns. 
One game is particularly interesting, as it requires significantly more time than the other games.

The presence of one game that is much more difficult is a feature of using random games.
It is likely that if we generated a new set of random games, we would obtain different results.
This could be ameliorated by experimenting on thousands of random games and even then it is still a game of chance whether some of these random games are significantly more difficult than the others.

\begin{table}[tb]
	\centering
	\begin{tabu} to \linewidth {X[16]X[12r]X[12r]X[12r]X[12r]X[10r]X[10r]X[10r]X[10r]X[10r]X[10r]}
		\toprule
		\textbf{Time} & 543 & 148 & 121 & 118 & 110 & 83 & 81 & 73 & 68 & 52 \\
		\textbf{Tangles} & 4,018 & 1,219 & 737 & 560 & 939 & 337 & 493 & 309 & 229 & 384 \\
		\textbf{Turns} & 91 & 56 & 23 & 25 & 30 & 12 & 18 & 10 & 10 & 18 \\
		\textbf{Size} & 1M & 1M & 700K & 1M & 700K & 1M & 1M & 1M & 1M & 1M \\
		\bottomrule
	\end{tabu}
	\vspace{1em}
	\caption{The $10$ hardest random games for the \texttt{atl} algorithm, with time in seconds and size in number of vertices.}
	\label{tbl:hardrandom}
\end{table}

\section{Tangles in other algorithms}
\label{sec:related}

We argue that tangles play a fundamental role in various other parity game solving algorithms.
We refer to~\cite{Oink2018} for descriptions of these algorithms.

\subsection{Small progress measures}

The small progress measures algorithm~\cite{DBLP:conf/stacs/Jurdzinski00} iteratively performs local updates to vertices
until a fixed point is reached.
Each vertex is equipped with some measure that records a statistic of the best game either player knows that they can play from that vertex so far.
By updating measures based on the successors, they essentially play the game backwards.
When they can no longer perform updates, the final measures indicate the
winning player of each vertex.

The measures in small progress measures record how often each even priority is encountered along the most optimal play (so far) until a higher priority is encountered.
As argued in~\cite{Oink2018} and~\cite{DBLP:journals/corr/GazdaW15},
player Even tries to visit vertices with even priorities as often as possible, while prioritizing plays with more higher even priorities.
This often resets progress for lower priorities.
Player Odd has the opposite goal, i.e., player Odd prefers to play to a lower even priority to avoid a higher even priority, even if the lower priority is visited infinitely often.
When the measures record a play from some vertex that visits more vertices with some even priority than exist in the game, this indicates that player Even can force player Odd into a cycle, unless they concede and play to a higher even priority.
A mechanism called cap-and-carryover~\cite{Oink2018} ensures via slowly rising measures that the opponent is forced to play to a higher even priority.

We argue that when small progress measures finds cycles of some priority $p$,
this is due to the presence of a $p$-tangle, namely precisely those vertices whose measures increase beyond the number of vertices with priority $p$, since these measures can only increase so far in the presence of cycles out of which the opponent cannot escape except by playing to vertices with a higher even priority.

One can now understand small progress measures as follows.
The algorithm indirectly searches for tangles of player Even, and then searches for the best escape for player Odd by playing to the lowest higher even priority.
If no such escape exists for a tangle, then the measures eventually rise to $\top$, indicating that player Even has a dominion.
Whereas tangle learning is affected by \emph{distractions}, small progress measures is driven by the dual notion of \emph{aversions}, i.e., high even vertices that player Odd initially tries to avoid.
The small progress measures algorithm tends to find tangles repeatedly, especially when they are nested.


\subsection{Quasi-polynomial time progress measures}

The quasi-polynomial time progress measures algorithm~\cite{DBLP:conf/spin/FearnleyJS0W17} is similar to small progress measures. 
This algorithm records the number of dominating even vertices along a play, i.e., such that every two such vertices are higher than all intermediate vertices.
For example, in the path $1\underline{2}131\underline{4}23\underline{2}15\underline{6}3\underline{2}1\underline{2}$, all vertices are dominated by each pair of underlined vertices of even priority.
Higher even vertices are preferred, even if this (partially) resets progress on lower priorities.

Tangles play a similar role as with small progress measures.
The presence of a tangle lets the value iteration procedure increase the measure up to the point where the other player ``escapes'' the tangle via a vertex outside of the tangle. 
This algorithm has a similar weakness to nested tangles, but it is less severe as progress on lower priorities is often retained.
In fact, the lower bound game in~\cite{DBLP:conf/spin/FearnleyJS0W17}, for which the quasi-polynomial time algorithm is slow, is precisely based on nested tangles and is easily solved by tangle learning.

%
%
%

\subsection{Strategy improvement}

As argued by Fearnley~\cite{DBLP:conf/lpar/Fearnley10}, tangles play a fundamental role in the behavior of strategy improvement.
Fearnley writes that instead of viewing strategy improvement as a process that tries to increase valuations, one can view it as a process that tries to force ``consistency with snares''~\cite[Sec.~6]{DBLP:conf/lpar/Fearnley10}, i.e., 
as a process that searches for escapes from tangles.

\subsection{Priority promotion}
\label{sec:tangles_in_pp}

Priority promotion~\cite{DBLP:conf/cav/BenerecettiDM16,Benerecetti2018} computes a top-down $\alpha$-maximal decomposition and identifies ``closed $\alpha$-regions'', i.e., regions where the losing player cannot escape to lower regions.
A closed $\alpha$-region is essentially a collection of possibly unconnected tangles and vertices that are attracted to these tangles.
Priority promotion then promotes the closed region to the lowest higher region that the losing player can play to, i.e., the lowest region that would attract one of the tangles in the region.
Promoting is different from attracting, as tangles in a region can be promoted to a priority that they are not attracted to.
Furthermore, priority promotion has no mechanism to remember tangles, so the same tangle can be discovered many times.
This is somewhat ameliorated in extensions such as region recovery~\cite{DBLP:conf/hvc/BenerecettiDM16} and delayed promotion~\cite{DBLP:journals/corr/BenerecettiDM16}, which aim to decrease how often regions are recomputed.

Priority promotion has a good practical performance for games where computing and attracting individual tangles is not necessary, e.g., when tangles are only attracted once and all tangles in a closed region are attracted to the same higher region, as is the case with the benchmark games of~\cite{DBLP:conf/fsen/Keiren15}.

\subsection{Zielonka's recursive algorithm}

Zielonka's recursive algorithm~\cite{DBLP:journals/tcs/Zielonka98} also computes a top-down $\alpha$-maximal decomposition, but instead of attracting from lower regions to higher regions, the algorithm attracts from higher regions to tangles in the subgame.
Essentially, the algorithm starts with the tangles in the lowest region and attracts from higher regions to these tangles.
When vertices from a higher $\alpha$-region are attracted to tangles of player $\invalpha$, progress for player $\alpha$ is reset.
Zielonka's algorithm also has no mechanism to store tangles and games that are exponential for Zielonka's algorithm, such as in~\cite{DBLP:conf/gandalf/BenerecettiDM17}, are trivially solved by tangle learning.

\section{Conclusions}
\label{sec:conclusions}

We introduced the notion of a tangle as a subgraph of the game where one player knows how to win all cycles.
We showed how tangles and nested tangles play a fundamental role in various parity game algorithms.
These algorithms are not explicitly aware of tangles and can thus repeatedly explore the same tangles. 
We proposed a novel algorithm called tangle learning, which identifies tangles in a parity game and then uses these tangles to attract sets of vertices at once.
The key insight is that tangles can be used with the attractor to form bigger (nested) tangles and, eventually, dominions.
We evaluated tangle learning in a comparison with priority promotion and Zielonka's recursive algorithm and showed that tangle learning is competitive for model checking and equivalence checking games, and outperforms other solvers for large random games.


We repeat Fearnley's assertion~\cite{DBLP:conf/lpar/Fearnley10} that ``a thorough and complete understanding of how snares arise in a game is a necessary condition for devising a polynomial time algorithm for these games.''
Fearnley also formulated the challenge to give a clear formulation of how the structure of tangles in a given game affects the difficulty of solving it.
We propose that a difficult game for tangle learning must be one that causes alternating tangle learning to have many turns before a dominion is found.

\section*{Acknowledgements}

We thank the anonymous referees for their helpful comments, Jaco
van de Pol for the use of the computer cluster, and Armin Biere
for generously supporting this research.

\clearpage

\bibliographystyle{splncs03}
\bibliography{lit}

\clearpage
\appendix

\end{document}